\title{A Typed Model for Dynamic Authorizations}
\author{
  Silvia Ghilezan 
  \qquad \qquad
  Svetlana Jak\v{s}i\'c 
   \qquad \qquad
  Jovanka Pantovi\'c 
      \institute{University of Novi Sad, Serbia}
  \and
  Jorge A. P\'erez 
  \institute{University of Groningen, The Netherlands}
  \and 
  Hugo Torres Vieira 
  \institute{IMT Institute for Advanced Studies Lucca, Italy}
}
\definecolor{ceca}{rgb}{1,0.5,0}
\newcommand{\vanja}[1]{#1}
\newcommand{\sg}[1]{#1}
\newcommand{\N}{{\cal N}}
\newcommand{\NA}{a}
\newcommand{\NB}{b}
\newcommand{\NC}{c}
\newcommand{\NX}{x}
\newcommand{\NY}{y}
\newcommand{\NZ}{z}
\newcommand{\fn}[1]{\mathsf{fn}(#1)}
\newcommand{\role}{\rho}
\newcommand{\PP}{P}
\newcommand{\PQ}{Q}
\newcommand{\PR}{R}
\newcommand{\inact}{0}
\newcommand{\parop}{\;|\;}
\newcommand{\rest}[1]{(\nu #1)}
\newcommand{\scope}[1]{(#1)}
\newcommand{\prefix}{\alpha}
\newcommand{\msg}{l}
\newcommand{\send}[4]{#1!#4}
\newcommand{\ssend}[2]{#1!#2}
\newcommand{\receive}[4]{#1?#4}
\newcommand{\rreceive}[2]{#1?#2}
\newcommand{\sauth}[4]{#1\langle#4\rangle}
\newcommand{\ssauth}[2]{#1\langle#2\rangle}
\newcommand{\rauth}[4]{#1(#4)}
\newcommand{\rrauth}[2]{#1(#2)}
\newcommand{\red}{\rightarrow}
\newcommand{\subst}[2]{\{#1/#2\}}
\newcommand\context{\mathcal{C}}
\definecolor{darkgreen}{rgb}{0.0, 0.6, 0}
\newtheorem{proposition}{Proposition}
\newtheorem{lemma}{Lemma}
\newtheorem{theorem}{Theorem}
\newtheorem{corollary}{Corollary}
\newtheorem{definition}{Definition}
\newenvironment{proof}[1][Proof]{\begin{trivlist}
\item[\hskip \labelsep {\bfseries #1}]}{\end{trivlist}}
\newcommand{\auth}[2]{\mathit{auth}(#1,#2)}
\newcommand{\rulename}[1]{\text{\small \textsc{#1}}}
\begin{document}
\maketitle

\begin{abstract}
Security requirements in distributed software systems are inherently dynamic.
In the case of authorization policies, resources are
meant to be accessed only by authorized parties, but the
authorization to access 
a resource may be dynamically granted/yielded. 
We 
describe ongoing work on a model for specifying communication
and dynamic authorization handling.
We build upon the $\pi$-calculus so as to enrich 
communication-based systems with authorization specification and
delegation; here authorizations regard channel usage and delegation 
refers to the act of yielding an authorization to another party.
Our 
model includes: (i) a novel scoping construct for
authorization, which allows to 
specify 
authorization boundaries,
and (ii) communication primitives for
authorizations, which allow to pass around authorizations to
act on a given channel.
An authorization error may consist in, e.g.,
performing an action along a name which is not under an appropriate
authorization scope.
We introduce a typing discipline that 
{ensures} that processes never 
reduce to authorization errors, even when authorizations are
dynamically delegated.
\end{abstract}

\section{Introduction}

Nowadays, computing systems operate in distributed environments, which may be highly heterogeneous,
including at the level of trustworthiness. It is often the case that collaborating systems need to protect
themselves from malicious entities by enforcing authorization policies that ensure actions are
carried out by properly authorized parties. Such \emph{authorizations} to act upon a resource may 
be statically prescribed --- for instance, a determined party is known to have a determined authorization ---
but may also be dynamically established --- for example, when a server delegates a task to a slave it 
may be sensible to pass along the appropriate authorization to carry out the delegated task. 

{As a motivating example, consider the message sequence chart given in Figure~\ref{fig:bankmsc} 
describing a scenario where a client interacts with a bank portal in order to request a credit.
After the client submits the request, the bank portal asks a teller to approve the request, allowing
him/her to join the ongoing interaction. Apart from some rating that could be automatically calculated
by the bank portal, it is the teller who ultimately approves/declines the request. 
It then seems reasonable that
the teller 
\emph{impersonates} the bank when informing the client 
about the outcome of the request.
At this point we may ask: is the teller authorized to 
act on behalf of the bank portal in this structured interaction? 
Even if the teller gained access to the communication medium when joining the interaction,
 the authorization to act on behalf of the bank portal may  not be necessarily granted; 
 in such cases an explicit mechanism that dynamically grants such an authorization is required.
To account for this kind of scenarios, 
in previous work~\cite{DBLP:journals/corr/GhilezanJPPV14} we explored the idea of \emph{role authorizations}. 
It appears to us that 
the key notions underlying this idea can be well explored in a more general setting; 
here we aim at distilling such notions in a simple setting.}

\begin{figure}[t]
 \begin{center} 
 \includegraphics[width=70mm]{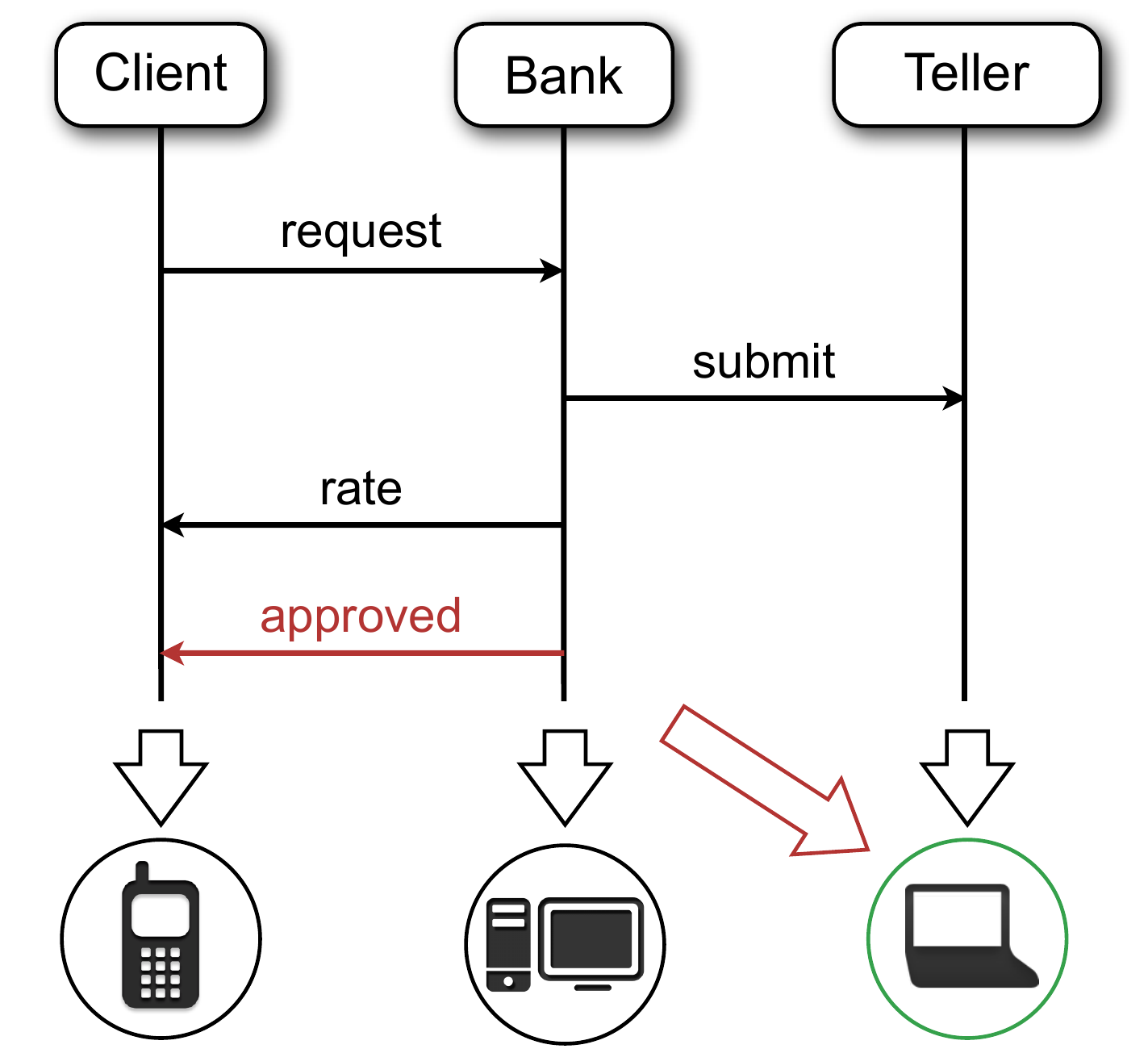}    
 \end{center}
\caption{Credit request scenario.}
\label{fig:bankmsc} 
\end{figure}

We distinguish an authorization from the resource itself: a system may already know the
identity of the resource (say, an email address or a file name) but may not be authorized 
to act upon it (e.g., is not able to send an email on behalf of a given address or to write on a file). Also, 
it might be the case that the system acquires knowledge about the resource (for instance, by receiving an 
email address or a file) but not necessarily is immediately granted access to act upon it. We 
focus on communication-centered systems in which authorizations are a first-class notion modeled in a 
dedicated way, minimally extending the $\pi$-calculus~\cite{DBLP:books/daglib/0004377} to 
capture dynamic authorization handling. As such, the resources that we consider are communication channels;
authorizations concern the ability to communicate on channels.

Authorizations may be associated with a \emph{spatial} connotation, as it seems fairly natural 
that 
a determined part of the system has access to 
a resource while the rest of the system does not. To this end, we introduce a \emph{scoping operator}
to specify delimited authorizations: we write $\scope{\NA} \PP$ to specify that process
$\PP$ is authorized to act upon the resource $\NA$. For example, by
$\scope{\NA} \send\NA\role\msg\NB.\PQ$ we specify a process that is authorized on channel $\NA$
and that is willing to use it to send $\NB$ after which it behaves as $\PQ$.

Also, since we are interested in addressing systems in which authorizations are dynamically passed 
around, we model authorization communication in a distinguished way by means of dedicated communication actions: we write $\sauth\NA\role\msg\NB.\PP$ to specify the action of sending an 
authorization to act upon $\NB$ (and proceeding as $\PP$) and $\rauth\NA\role\msg\NB.\PP$ to 
specify the action of receiving an authorization to act upon $\NB$ (and proceeding as $\PP$), where
in both cases channel $\NA$ is used as the underlying communication medium. We remark that both in
the authorization scoping $\scope{\NA} \PP$ and in the authorization reception 
$\rauth\NB\role\msg\NA.\PP$ the name $\NA$ is not bound so as to capture the notion that 
authorizations are handled at the level of known identities. 

Given the \vanja{sensitive} nature of an authorization, 
we believe it is natural to enforce a {\emph{specialized}} 
discipline regarding authorization manipulation. 
Namely, we consider that the act of passing along an authorization ---\emph{authorization delegation}---
entails the yielding of the communicated 
authorization.
That is, a party willing to communicate an authorization loses it after synchronization. 
Consider, for example, a process
$$S = \scope{\NA} ( \rauth\NA\role\msg\NB.\PP \parop \scope{\NB} \sauth\NA\role\msg\NB.\PQ)$$ 
while the process on the left-hand side of the parallel composition ($\parop$) is awaiting an authorization to act 
on $\NB$ (via a synchronization on channel $\NA$), the process on the right-hand side is willing to delegate 
authorization to act on $\NB$. In one reduction step, process $S$ evolves to 
$\scope{\NA} ( (\scope{\NB}\PP) \parop \PQ)$, 
thus capturing the fact that 
the process on the right ($\PQ$) has now
 lost the authorization to act on $\NB$. 
Notice that this authorization transfer may have influence on the
 resources already known to the receiving party (i.e., process $\PP$ may specify
communications on channel $\NB$). 

{The fact that authorizations are yielded when communicated allows us to model a form of 
authorization accounting, in the sense that authorizations are viewed as a ``countable'' resource.
As such, in general we would expect $\scope\NA\scope\NA\PP$ to differ from $\scope\NA\PP$. However, 
since we intuitively interpret $\scope\NA\PP$ as 
``the whole of $\PP$ is authorized to interact on $\NA$'',
it does not seem sensible that part of $\PP$ can completely yield the authorization. 
Consider, e.g., process $\scope\NA(\ssauth\NB\NA.\PP \parop  \PQ)$ where it does not seem reasonable
that the authorization delegation expressed by prefix $\ssauth\NB\NA$ interferes with the
authorization on $\NA$ already held by $\PQ$ which is (concurrently) active in the authorization
scope.
}
{Hence, a} \vanja{system cannot create/discard  valid authorizations (that are scoping active processes); authorizations can only float around.} 
It is also reasonable to 
{allow that} a party that delegates an authorization may get it back via another  synchronization step. 
This way, our model {allows for} reasoning about authorization ownership and lending. \vanja{Finally, we envisage (in our untyped model) a possibility for sharing a given authorization scope with a specified number of parties. This may be represented by specifying multiple copies of the same authorization scope. For example, process $\scope\NA\scope\NA\scope\NB\ssauth\NB\NA.\PP$ 
{(or $\scope\NA\scope\NB\ssauth\NB\NA.\scope\NA\PP$)} will retain authorization scope for $\NA$ and reduce to 
\sg{$\scope\NA\scope\NB\PP,$} after communication with $\scope\NB\rrauth\NB\NA.\PQ$}

Some previous works have explored dedicated scoping operators with security motivations~(see, e.g.,~\cite{Giunti,VivasYoshida}).
However, to our knowledge, 
the particular combination of a (non binding) scoping construct with 
name passing as in the $\pi$-calculus 
seems to be new. 
The syntactic elements of our process model, together with the dynamic nature of 
authorizations, pose challenges at the level of statically identifying processes 
that act only upon resources for which they are properly authorized. 
In this paper we start exploring 
a typing discipline for authorization manipulation that allows to statically {ensure} that processes 
 never incur in authorization errors, essentially by accounting process authorization requirements. 
In the remaining, we formally present the language and type system, and state our 
results. 

\section{Process Model}
\label{s:model}

\begin{table}[t]
\begin{math}
\displaystyle
\begin{array}[t]{@{}rcl@{\quad}l@{}}
  \PP,\PQ & ::= & \inact & \text{(Inaction)} \\
          & | & \PP\parop\PQ & \text{(Parallel)}\\
          & | & \rest\NA \PP & \text{(Restriction)}\\
          & | & \scope{\NA} \PP & \text{(Authorization)}\\
          & | & \prefix.\PP & \text{(Prefix)}
\end{array}
\qquad
\begin{array}[t]{@{}rcl@{\quad}l@{}}
  \prefix   & ::= & \send\NA\role\msg\NB & \text{(Output)} \\
            &  |  & \receive\NA\role\msg{x} & \text{(Input)}\\
            &  |  & \sauth\NA\role\msg\NB & \text{(Send authorization)}\\
            &  |  & \rauth\NA\role\msg\NB & \text{(Receive authorization)}\\
\end{array}
\end{math}
\vspace{-4mm}
\caption{\label{tab:syntax}Syntax of processes.}
\end{table}

We introduce our process 
calculus with authorization scoping and authorization delegation. Let $\N$ be a countable set of {\em names}, ranged over by $\NA,$ $\NB,$ $\NC, \ldots,$ $\NX,$ $\NY,$ $\NZ.$ The syntax of processes is given in Table~\ref{tab:syntax}. Processes $\inact,$ $\PP\parop\PQ,$ $\rest\NA\PP,$ $\send\NA\role\msg\NB.\PP$ and $\receive\NA\role\msg{x}.\PP$ comprise the usual $\pi$-calculus operators for specifying inaction, parallel composition, name restriction, and output and input communication actions, respectively. 
We introduce three novel operators, motivated earlier: 
\begin{enumerate}[1.]
\item $\sauth\NA\role\msg\NB.\PP$ sends an authorization for the name $\NB$ on $\NA$ and proceeds as $\PP$; 
\item $\rauth\NA\role\msg\NB.\PP$ receives an authorization for the name $\NB$ on $\NA$ and proceeds as $\PP$; 
\item $\scope{\NA} \PP$ authorizes all actions on the channel $\NA$ in $\PP$.
\end{enumerate} 
We remark on the novel reasoning regarding scope authorization $\scope{\NA}\PP$ in combination with
$\pi$-calculus-like name passing, since all actions on channel $\NA$ in process $\PP$ are authorized,
including actions originally specified for received names. For example, consider a process 
$\PP = \scope{\NB}\scope{\NA} \receive\NB\role\msg{x}.\send{x}\role\msg\NC.\inact$ that interacts in a context that sends
name $\NA$ on $\NB$.  Then $\PP$ may evolve to $\PP' =\scope{\NB} \scope{\NA} \send{\NA}\role\msg\NC.\inact$, which is authorization safe.
Still, authorizations may be ``revoked'' via authorization delegations. 

We introduce some auxiliary notions and abbreviations, useful for the remaining formal presentation.
%
\sg{The set 
of free names of a  process $\PP$, denoted 
$\fn\PP$, accounts for authorization constructs in the following way:}
\begin{eqnarray*}
\fn{\scope\NA \PP} & \triangleq &  \{a\} \cup \fn{\PP}  \\
 \fn{\sauth\NA\role\msg\NB.\PP} = \fn{\rauth\NA\role\msg\NB.\PP} & \triangleq &  \{a,b\} \cup\fn\PP
\end{eqnarray*}
Given a name $a$, 
we use $\alpha_\NA$ to refer to either
$\send\NA\role\msg\NB$, $\receive\NA\role\msg{x}$, $\sauth\NA\role\msg\NB$, or $\rauth\NA\role\msg\NB$. 
We abbreviate $(\nu a_1)(\nu a_2)\ldots(\nu a_k) P$ by  $(\nu \vec{a}) P$
and likewise $\scope{a_1}\scope{a_2} \ldots \scope{a_k} P$ by $\scope{\vec{a}} P$.

\begin{table}[t]
\[
\begin{array}{@{}c@{}} 
  \inferrule[]{}
  {\PP\parop\inact\equiv\PP}\qquad\quad
  \inferrule[]{}
  {\PP\parop\PQ\equiv\PQ\parop\PP}\qquad\quad
  \inferrule[]{}
  {(\PP\parop\PQ)\parop\PR\equiv\PP\parop(\PQ\parop\PR)}\qquad\quad
  \inferrule[]{}
  {\rest\NA\inact\equiv\inact}
   \vspace{2mm} \\
  \inferrule[]{}
  {\rest\NA\rest\NB\PP\equiv\rest\NB\rest\NA\PP}\qquad\quad
  \inferrule[]{}
  {\PP\parop\rest\NA\PQ\equiv\rest\NA(\PP\parop\PQ)\quad\text{if $\NA\notin\fn\PP$}}
    \qquad\quad
  \inferrule[]{}
  {\PP \equiv_{\alpha} \PQ \implies \PP \equiv \PQ}
 \vspace{2mm} \\
  \inferrule[]{}
  {\scope{\NA}\scope{\NB}\PP \equiv \scope{\NB}\scope{\NA}\PP}
    \qquad
  \inferrule[]{}
  {\scope{\NA}\inact \equiv \inact}
  \qquad
    \inferrule[]{}
  {\scope{\NA}(\PP \parop \PQ) \equiv \scope{\NA}\PP \parop \scope{\NA}\PQ}
\qquad
  \inferrule[]{}
  {\scope{\NA}\rest{\NB}\PP \equiv \rest{\NB}\scope{\NA}\PP \quad\text{if $\NA\neq \NB$}
  }
\end{array}
\]
\vspace{-4mm}
\caption{\label{tab:structural}Structural congruence.}
\end{table}

\emph{Structural congruence} 
expresses basic identities on the structure of processes; it is defined as
the least equivalence relation between processes that satisfies the rules given in
Table~\ref{tab:structural}. Apart from the usual identities for the static fragment of the $\pi$-calculus 
(cf. first seven rules in Table~\ref{tab:structural}), structural congruence 
gives basic principle for 
the novel authorization scope: (i) 
authorizations can be swapped around; (ii) authorizations can be discarded/created only for the inactive process; 
(iii) authorizations distributes over parallel composition; and (iv) authorizations and name restrictions 
can be swapped if the corresponding names differ. 
We remark that, differently from name restrictions, authorization 
scopes can \vanja{neither}  be extruded/confined: 
since authorizations are specified
for free names (that cannot be $\alpha$-converted), extruding/confining
authorizations
actually changes the meaning of processes. \vanja{For example, processes
$\scope\NB\rreceive\NB\NX.\ssend\NX\NB.\inact$ and $\scope\NA\scope\NB\rreceive\NB\NX.\ssend\NX\NB.\inact$ are not considered as structurally congruent, as the latter one authorizes the action on $\NA$ in case it receives $\NA$ through $\NB,$ while the former one does not.  Another distinctive property comes from the significance of multiplicity of authorization scopes. We do not adopt $\scope\NA\PP \equiv \scope\NA\scope\NA\PP$ for $\PP\neq\inact$}, {for the sake of authorization accounting}. 
%
%
Before presenting the operational semantics of the language, we ensure that the rewriting supported by structural
congruence is enough to isolate \vanja{top level} communication actions 
{together with their respective authorization scopes.} 

\begin{proposition}[Normal Form]
\label{pro:normalform}
For any process $\PQ$ we have that there are $\PP_1, \ldots,$ $\PP_k,$ $\prefix_1, \ldots,$ $\prefix_k,$ $\vec{c}$,
and $\vec{a}_1$, \ldots, $\vec{a}_k$, where $(\nu \vec{c})$ and $\scope{\vec{a}_i}$ for $i \in 1,\ldots,k$ can be empty sequences, 
such that
\begin{equation} \label{normal_form}
\PQ \equiv (\nu \vec{c}) (\scope{\vec{a}_1}\prefix_1.P_1 \parop \scope{\vec{a}_2}\prefix_2.P_2  \parop \ldots \parop \scope{\vec{a}_k}\prefix_k.P_k)
\end{equation}
\end{proposition}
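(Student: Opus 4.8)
The plan is to prove the statement by structural induction on the process $\PQ$, at each step pushing name restrictions to the outermost level, distributing authorization scopes over parallel composition, and using the neutrality and discardability rules to clean up inactive summands. Throughout, I will freely use the rules of Table~\ref{tab:structural}, in particular scope extrusion of restrictions ($\scope{\NA}\rest{\NB}\PP \equiv \rest{\NB}\scope{\NA}\PP$ when $\NA\neq\NB$), distributivity of scopes over $\parop$, and the structural laws for $\rest{\cdot}$ and $\parop$ from the static $\pi$-calculus fragment.

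First I would handle the base cases. If $\PQ = \inact$, then taking $k=1$, the prefix sequence empty in a degenerate sense does not quite work since we need a genuine $\prefix_i$; instead I note that we may regard the empty parallel product (or, more carefully, allow the sequences $\vec{c}$ and each $\vec{a}_i$ to be empty and allow the product to be the single term that is structurally congruent to $\inact$). It is cleanest to treat $\inact$ as the empty product and remark that the normal form subsumes it. If $\PQ = \prefix.\PP$, then $\PQ$ is already in the required shape with $k=1$, empty $\vec{c}$, and empty $\vec{a}_1$. For the inductive cases: if $\PQ = \PP_1 \parop \PP_2$, apply the induction hypothesis to each $\PP_j$ to get normal forms $(\nu \vec{c}^{\,j})(\prod_i \scope{\vec{a}^{\,j}_i}\prefix^j_i.R^j_i)$; after $\alpha$-converting the bound names in $\vec{c}^{\,1}$ and $\vec{c}^{\,2}$ to be disjoint from each other and from the free names of the other component, repeated use of the scope-extrusion rule for $\parop$ (the sixth static rule) pulls all restrictions to the front, and associativity/commutativity of $\parop$ merges the two lists of guarded summands. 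If $\PQ = \rest{\NA}\PP$, take the normal form of $\PP$ and absorb the leading $\rest{\NA}$ into $\vec{c}$. If $\PQ = \scope{\NA}\PP$, take the normal form $(\nu\vec{c})\prod_i \scope{\vec{a}_i}\prefix_i.R_i$ of $\PP$; after $\alpha$-renaming so that $\NA \notin \vec{c}$, push $\scope{\NA}$ inward past $\rest{\vec{c}}$ using rule (iv) and then distribute it over the parallel composition, yielding $\prod_i \scope{\NA}\scope{\vec{a}_i}\prefix_i.R_i$, which is again of the right form with $\vec{a}_i$ replaced by $\NA,\vec{a}_i$.

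The main obstacle I anticipate is purely bookkeeping rather than conceptual: ensuring that the bound names in $\vec{c}$ can always be chosen fresh enough for the extrusion rules to apply (the side conditions $\NA \notin \fn\PP$ and $\NA \neq \NB$), which requires invoking $\alpha$-conversion (the seventh static rule) at each inductive step, and being careful that $\alpha$-conversion is legitimate here — bound names introduced by $\rest{\cdot}$ may be renamed, but the free names appearing in authorization scopes $\scope{\cdot}$ and in authorization prefixes may not. Since the only binders are $\rest{\cdot}$ and input, and the names we need to move are exactly the restricted ones, this is unproblematic, but it must be stated with care. A secondary minor point is the treatment of $\inact$: one should either explicitly permit $k=0$ (empty product) in the statement, or observe that $\scope{\NA}\inact \equiv \inact$ and $\PP \parop \inact \equiv \PP$ let us absorb stray inactive summands, so that the degenerate shape does not obstruct the induction.
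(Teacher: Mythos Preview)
Your proposal is correct and follows essentially the same approach as the paper: structural induction on $\PQ$, with the same case analysis (base cases $\inact$ and $\prefix.\PP$; inductive cases $\parop$, $\rest\NA$, $\scope\NA$), the same use of $\alpha$-conversion to freshen restricted names before extrusion, and the same step of pushing an outer authorization past the restrictions and then distributing it over the parallel composition. Your discussion of the $\inact$ case (empty product, $k=0$) is in fact more explicit than the paper's one-line treatment.
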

\begin{proof}
(by induction on the structure of $\PQ$)
\\
$\PQ \equiv \inact:$ It is in the form (\ref{normal_form}).
\\
$\PQ \equiv \PQ' \parop \PQ'':$  By induction hypothesis, we have that
\[
\PQ'  \equiv  (\nu\vec{c})(\scope{\vec{a}_1}\prefix_1.P_1 \parop\ldots\parop\scope{\vec{a}_k}\prefix_k.P_k) \qquad
\PQ'' \equiv  (\nu\vec{d})(\scope{\vec{b}_1}\beta_1.Q_1    \parop \ldots \parop \scope{\vec{b}_l}\beta_l.Q_l)
\]
and we can assume, by application of $\alpha$-conversion, that $\vec{d} \cap \fn{\PQ'} =\emptyset.$ Therefore,
\[
\PQ \equiv (\nu\vec{c})(\nu\vec{d})
(
 \scope{\vec{a}_1}\prefix_1.P_1\parop\ldots\parop\scope{\vec{a}_k}\prefix_k.P_k \parop 
 \scope{\vec{b}_1}\beta_1.Q_1 \parop \ldots \parop \scope{\vec{b}_l} \beta_l.Q_l
).
\]
\item[] $\PQ\equiv \rest\NA \PP:$ Applying the induction hypothesis on $\PP$, we have that
\[
\PQ \equiv \rest\NA (\nu \vec{c}) (\scope{\vec{a_1}}\prefix_1.P_1 \parop \scope{\vec{a_2}}\prefix_2.P_2 
\parop \ldots \parop \scope{\vec{a_k}}\prefix_k.P_k).
\]
\item[] $\PQ \equiv \scope{\NA} \PP:$ By induction hypothesis and $\alpha$-conversion, we have that
\[
  \PQ \equiv \scope{\NA} (\nu \vec{c}) (\scope{\vec{a}_1}\prefix_1.P_1 \parop \scope{\vec{a}_2}\prefix_2.P_2 
\parop \ldots \parop \scope{\vec{a}_k}\prefix_k.P_k),
\] 
where $\NA\not \in \{c_1,\ldots,c_k\}.$ Hence,
\[
  \PQ \equiv (\nu \vec{c})(\scope{\NA}\scope{\vec{a}_1}\prefix_1.P_1 \parop \ldots \parop \scope{\NA}\scope{\vec{a}_k}\prefix_k.P_k).
\] 
$\PQ\equiv \prefix.\PP:$ It is in the form (\ref{normal_form}).
\end{proof}

\begin{table}[t]
\[
\begin{array}[t]{@{}c@{\qquad}c@{\qquad}c@{\qquad}c@{}}
\inferrule[(stru)]
{\PP \equiv \PP' \red \PQ' \equiv \PQ}
{\PP \red \PQ}
&
\inferrule[(parc)]
{\PP \red \PQ}
{\PP \parop \PR \red \PQ \parop \PR}
&
\inferrule[(newc)]
{\PP \red \PQ}
{\rest\NA \PP \red \rest\NA \PQ}
&
\inferrule[(autc)]
{\PP \red \PQ}
{\scope\NA \PP \red \scope\NA \PQ}
\\\\
\multicolumn{4}{c}
{\inferrule[(comm)]{}
{\scope{\vec{\NA}_1}\scope{\NB} \send\NB\role\msg{\NC}.\PP \parop \scope{\vec{\NA}_2}\scope{\NB}\receive\NB\role\msg{x}.\PQ
\red
\scope{\vec{\NA}_1}\scope{\NB} \PP \parop \scope{\vec{\NA}_2}\scope{\NB}\PQ\subst{\NC}{x}
}}
\\\\
\multicolumn{4}{c}
{\inferrule[(auth)]{}
{\scope{\vec{\NA}_1}\scope{\NB}\scope{\NC} \sauth\NB\role\msg{\NC}.\PP \parop \scope{\vec{\NA}_2}\scope{\NB}\rauth\NB\role\msg\NC.\PQ
\red
\scope{\vec{\NA}_1}\scope{\NB} \PP \parop \scope{\vec{\NA}_2}\scope{\NB}\scope{\NC}\PQ
}}
\end{array}
\]
\vspace{-4mm}
\caption{\label{tab:red}Reduction rules.}
\end{table}
We may then characterize the evolution of systems via a reduction relation, denoted by $\red$, defined as the least relation that satisfies  
the rules given in Table~\ref{tab:red}, focusing on the representative cases for synchronization and closing the relation under structural 
congruence $\rulename{(stru)}$ and static contexts $\rulename{(parc)}$, $\rulename{(newc)}$, and $\rulename{(autc)}$. 
Rule~$\rulename{(comm)}$ formalizes communication of names, stating that it can be performed only via authorized channel names 
--- notice we single out authorization scopes on channel $\NB$ both for output and input. Authorization delegation 
is formalized by rule~$\rulename{(auth)}$. It meets the following requirements: synchronization is realized via an authorized channel 
and the emitting process must have the authorization in order to delegate it away (names $\NB$ and $\NC$ in the rule, respectively);
after sending an authorization for a name the emitting process proceeds ($\PP$) falling outside of authorization scope of that name 
(losing authorization), and after receiving an authorization for a name the receiving process proceeds ($\PQ$) under the scope of the 
received authorization (acquiring authorization). Notice rules $\rulename{(comm)}$ and $\rulename{(auth)}$ address 
action prefixes up to the relevant authorizations (cf. Proposition~\ref{pro:normalform}). We denote by $\red^{\star}$ the reflexive and transitive closure of $\red.$

We introduce some auxiliary notions in order to syntactically characterize \emph{authorization errors} in our setting. 
First of all, we define the usual notion of \emph{active contexts} for our calculus:

\begin{definition}[Active Context]
\begin{math}
\displaystyle
\begin{array}[t]{@{}rcl@{\quad}l@{}}
  \context[\cdot] \; ::= \; \cdot \quad | \quad \PP\parop \context[\cdot] \quad
          | \quad \rest\NA \context[\cdot] \quad
          | \quad \scope{\NA} \context[\cdot]
\end{array}
\end{math}
\end{definition}

Active contexts allow us to talk about any active communication prefixes of a process.
Also, we may introduce a predicate that states that  an active context authorizes actions on a given channel.
More precisely, for a given context and a channel, when the hole of the context is filled with an action on the channel, it is authorised for that action.

\begin{definition}[Context Authorization]
For an active context $\context[\cdot]$ and a channel $\NA,$ the context authorization predicate, denoted $\auth{\context[\cdot]}{\NA}$, is defined inductively on the structure of $\context[\cdot]$ as
\begin{equation*}
\auth{\context[\cdot]}{\NA} 
\triangleq
\ \left\{\begin{array}{@{}l@{\quad}l@{}}
\mathit{false} & \text{if } \context[\cdot] = \cdot\\
\mathit{true} & \text{if } \context[\cdot] = \scope\NA\context'[\cdot] \\
\auth{\context'[\cdot]}{\NA} & \text{if } \context[\cdot] = \scope\NB\context'[\cdot] \text{ and } \NA \neq \NB\\
\auth{\context'[\cdot]}{\NA} & \text{if } \context[\cdot] = \PP \parop \context'[\cdot] \\
\auth{\context'[\cdot]}{\NA} & \text{if } \context[\cdot] = \rest\NB\context'[\cdot] 
\end{array}\right.
\end{equation*}
\end{definition}

We may then use active contexts and context authorization to precisely characterize errors in our model,
since active contexts allow us to talk about any communication prefix in the process and the context 
authorization predicate allows to account for the authorizations granted by the context:
processes that have active communication prefixes which are not authorized are errors.

\begin{definition}[Error]\label{d:error}
We say process $\PP$ is an error if $P \equiv \context[\alpha_\NA.\PQ]$ where 
\begin{enumerate}
\item $ \auth{\context[\cdot]}{\NA}=\mathit{false}$ or 
\item $\alpha_{\NA} = \sauth\NA\role\msg\NB$ and
$\auth{\context[\cdot]}{\NB}=\mathit{false}$.
\end{enumerate}
\end{definition}

Notice that by $\alpha_\NA$ we refer to any communication action on channel $\NA$, so {intuitively
communication actions that may cause stuck configurations according to our semantics (where
synchronizations only occur when processes hold the proper authorizations) are seen as errors.}

\section{Type System}\label{s:types}
In order to statically single out the processes that can never evolve into authorization errors, 
we introduce a suitable type system that accounts for the authorizations required by the processes.

\paragraph{Typing Judgment and Typing Rules.}
Let  $\rho$ denote a set of names.
The typing judgment $\rho \vdash \PP$
states that process $\PP$ is typed if the context provides authorizations for names $\rho$; hence the process 
performs actions in the (unauthorized) names $\rho$ (i.e., actions along names not under respective 
authorization scopes). 
Thus, $\emptyset \vdash \PP$ says that all communication actions prescribed by 
process $\PP$ are authorized, i.e., occur within the scope of the appropriate authorizations.
We say that process $\PP$ is {\em well typed} if $\emptyset \vdash \PP$.

Typing rules are given in Table~\ref{tab:typing}.
The inactive process contains no actions along unauthorized channel names $(\rulename{tstop}).$
If two processes act along unauthorized channel names $\rho_1$ and $\rho_2,$ their parallel composition 
performs actions along the union $\rho_1\cup\rho_2$ of unauthorized names  $(\rulename{tpar}).$
If a typed process $\PP$ does not perform actions along a channel $\NA,$ the process where name $\NA$ 
is restricted is typed under the same set of names as $\PP$ $(\rulename{tnew}).$
If $\PP$ acts under a set of unauthorized names $\rho,$ then $\scope\NA\PP$ authorizes  
$\NA$ in $\PP$ and thus performs actions under the set of unauthorized names $\rho\setminus\{\NA\}$ $(\rulename{tauth}).$
Sending a name along a channel $\NA$ extends the set of unauthorized names with the name $\NA$ $(\rulename{tsend}).$
Receiving a name $\NX$ along a channel  $\NA$  extends the set of unauthorized names with the name $\NA$ and it is typed only if there is no unauthorized actions along $\NX$ within $\PP$ $(\rulename{trecv}).$
Sending authorization for a name $\NB$ along a channel $\NA$ extends the set of unauthorized names with both names $\NA$ and $\NB$ $(\rulename{tdeleg}).$
Receiving authorization for a name $\NB$ along a name $\NA$ is typed under the set of unauthorized names that is extended with $\NA$ and does not contain $\NB$ (since the reception authorizes $\NB$ in $\PP$).  

\begin{table}[t]
\[
\begin{array}[t]{@{}c@{\qquad}c@{\qquad}c@{\qquad}c@{}}
\inferrule[(tstop)]
{}{\emptyset \vdash \inact}
&
\inferrule[(tpar)]
{\rho_1 \vdash \PP \quad \rho_2 \vdash \PQ}
{\rho_1 \cup \rho_2 \vdash \PP \parop \PQ}
&
\inferrule[(tnew)]
{\rho \vdash \PP \quad \NA \not \in \rho}
{\rho \vdash \rest\NA \PP}
&
\inferrule[(tauth)]
{\rho \vdash \PP}
{\rho \setminus \{\NA\} \vdash \scope\NA\PP}
\\\\
\inferrule[(tsend)]
{\rho \vdash \PP}
{\rho \cup \{\NA\} \vdash \send\NA\role\msg\NB. \PP}
&
\inferrule[(trecv)]
{\rho \vdash \PP \quad x \not \in \rho}
{\rho \cup \{\NA\} \vdash \receive\NA\role\msg{x}. \PP}
& 
\inferrule[(tdeleg)]
{\rho \vdash \PP \quad \NB \not \in \rho}
{\rho \cup \{\NA,\NB\} \vdash \sauth\NA\role\msg\NB. \PP}
&
\inferrule[(trecp)]
{\rho \vdash \PP }
{(\rho \setminus \{\NB\})\cup \{\NA\} \vdash \rauth\NA\role\msg\NB. \PP}
\end{array}
\]
\vspace{-4mm}
\caption{\label{tab:typing}Typing rules.}
\end{table}

\paragraph{Main Results.}
Our main result is \emph{type safety}: well-typed processes never evolve into an (authorization) error, 
in the sense of Definition~\ref{d:error}. This is  stated as Corollary~\ref{cor:safety}; 
before giving the main statement we show its supporting results.
We first state a basic property of typing derivations: unauthorized names must be included
in the free names of the process. 
\begin{proposition} 
\label{pro:freenames}
If $\rho \vdash \PP$ then $\rho \subseteq \fn{\PP}$.
\end{proposition}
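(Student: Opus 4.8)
The plan is to prove Proposition~\ref{pro:freenames} by induction on the derivation of $\rho \vdash \PP$, case-splitting on the last typing rule applied. For each rule I would use the induction hypothesis on the premise(s) together with the definition of $\fn{\cdot}$ given earlier in the excerpt, paying particular attention to the clauses for the authorization constructs $\scope{\NA}\PP$, $\sauth\NA\role\msg\NB$, and $\rauth\NA\role\msg\NB$.

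The routine cases are as follows. For $\rulename{(tstop)}$ we have $\rho = \emptyset \subseteq \fn{\inact}$ trivially. For $\rulename{(tpar)}$, from $\rho_1 \vdash \PP$ and $\rho_2 \vdash \PQ$ the hypothesis gives $\rho_1 \subseteq \fn\PP$ and $\rho_2 \subseteq \fn\PQ$, so $\rho_1 \cup \rho_2 \subseteq \fn\PP \cup \fn\PQ = \fn{\PP\parop\PQ}$. For $\rulename{(tnew)}$, since the rule has the side condition $\NA \notin \rho$, we get $\rho \subseteq \fn\PP$ by hypothesis and $\rho \subseteq \fn\PP \setminus \{\NA\} \subseteq \fn{\rest\NA\PP}$. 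For $\rulename{(tsend)}$, the hypothesis gives $\rho \subseteq \fn\PP$, hence $\rho \cup \{\NA\} \subseteq \{\NA\} \cup \fn\PP \subseteq \{\NA,\NB\}\cup\fn\PP = \fn{\send\NA\role\msg\NB.\PP}$; the case $\rulename{(trecv)}$ is analogous, noting that $\fn{\receive\NA\role\msg{x}.\PP} \supseteq \{\NA\}\cup(\fn\PP\setminus\{x\})$ and that the side condition $x\notin\rho$ together with $\rho\subseteq\fn\PP$ yields $\rho\subseteq\fn\PP\setminus\{x\}$. For $\rulename{(tdeleg)}$ we similarly get $\rho\cup\{\NA,\NB\}\subseteq\{\NA,\NB\}\cup\fn\PP = \fn{\sauth\NA\role\msg\NB.\PP}$.

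The two cases that require slightly more care are the ones where the conclusion removes a name from $\rho$, namely $\rulename{(tauth)}$ and $\rulename{(trecp)}$. For $\rulename{(tauth)}$: from $\rho \vdash \PP$ we get $\rho \subseteq \fn\PP$, so $\rho \setminus \{\NA\} \subseteq \fn\PP \subseteq \{\NA\}\cup\fn\PP = \fn{\scope\NA\PP}$ — here one just observes that removing $\NA$ only shrinks the set, so the inclusion into $\fn\PP$ (and a fortiori into $\fn{\scope\NA\PP}$) still holds. For $\rulename{(trecp)}$: from $\rho\subseteq\fn\PP$ we get $(\rho\setminus\{\NB\})\cup\{\NA\} \subseteq \fn\PP \cup \{\NA\} \subseteq \{\NA,\NB\}\cup\fn\PP = \fn{\rauth\NA\role\msg\NB.\PP}$. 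I do not expect any genuine obstacle; the only subtlety is that $\alpha$-conversion is built into structural congruence but the statement is about a fixed syntactic $\PP$ with a fixed derivation, so no $\equiv$-reasoning is needed and the induction is purely structural on the typing tree.
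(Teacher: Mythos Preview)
Your proposal is correct and follows essentially the same approach as the paper's own proof: an induction on the typing derivation, case-splitting on the last rule and using the induction hypothesis together with the definition of $\fn{\cdot}$ in each case. The cases you flag as needing slightly more care ($\rulename{(tauth)}$ and $\rulename{(trecp)}$) are handled in the paper in the same elementary way you describe.
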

\begin{proof}
(by induction on the depth of the derivation of $\rho \vdash \PP$)
\\
If $\emptyset\vdash\inact$ then $\fn{\inact}=\emptyset.$
\\
The following cases follow by definition of free names and the induction hypothesis.
\\
Case $\rho_1\cup\rho_2\vdash \PP\parop\PQ$ is derived from $\rho_1\vdash\PP$ and $\rho_2\vdash\PQ:$  $\fn{\PP\parop\PQ}=\fn{\PP}\cup\fn{\PQ} \supseteq \rho_1\cup\rho_2.$
\\
Case $\rho\vdash\rest\NA\PP$ is derived from $\rho\vdash\PP$ and $\NA\not\in\rho:$  $\fn{\rest\NA\PP}=\fn{\PP}\setminus\{\NA\}\supseteq \rho\setminus\{\NA\}=\rho.$
\\
Case  $\rho\setminus\{\NA\}\vdash\scope\NA\PP$ is derived from $\rho\vdash\PP:$ $\fn{\scope\NA\PP}=\fn{\PP}\cup\{\NA\}\supseteq \rho\cup\{\NA\}\supseteq \rho\setminus\{\NA\}.$
\\
Case $\rho\cup\{\NA\}\vdash\send\NA\role\msg\NB. \PP$ is derived from $\rho\vdash\PP:$ $\fn{\send\NA\role\msg\NB. \PP}=\fn{\PP}\cup\{\NA\}\supseteq \rho\cup\{\NA\}.$
\\
Case $\rho \cup \{\NA\} \vdash \receive\NA\role\msg{x}. \PP$ is derived from $\rho\vdash\PP$ and $\NX\not\in \rho:$
$\fn{\receive\NA\role\msg{x}. \PP}=(\fn{\PP}\setminus\{\NX\})\cup\{\NA\} \supseteq (\rho \setminus\{\NX\})\cup\{\NA\} =\rho\cup\{\NA\}.$
\\
Case $\rho \cup \{\NA,\NB\} \vdash \sauth\NA\role\msg\NB. \PP$ is derived from $\rho\vdash\PP$ and $\NB\not\in \rho:$
$\fn{\sauth\NA\role\msg\NB. \PP}=\fn{\PP}\cup\{\NA,\NB\} \supseteq \rho\cup\{\NA,\NB\}.$
\\
Case $(\rho \setminus \{\NB\})\cup \{\NA\} \vdash \rauth\NA\role\msg\NB. \PP$ is derived from $\rho\vdash\PP:$
$\fn{\rauth\NA\role\msg\NB. \PP}=\fn{\PP}\cup\{\NA,\NB\}\supseteq \rho\cup\{\NA,\NB\} \supseteq (\rho \setminus \{\NB\})\cup \{\NA\}.$
\end{proof}

We now state results used to prove that typing is preserved under system evolution, namely that
(i) typing is preserved under structural congruence, as reduction is closed under structural congruence,
and that (ii) typing is preserved under name substitution, since channel passing involves name substitution.

\begin{lemma}[Inversion Lemma] \label{lem:inversion}
\begin{enumerate}[1.]
\item If $\rho \vdash \inact$ then $\rho=\emptyset.$
\item If $\rho \vdash \PP \parop \PQ$ then there are  $\rho_1$ and $\rho_2$ such that $\rho=\rho_1\cup\rho_2$ and $\rho_1\vdash\PP$ and $\rho_2\vdash\PQ.$
\item If $\rho \vdash \rest\NA \PP$ then $\rho \vdash \PP$ and $\NA \not \in \rho.$
\item If $\rho \vdash \scope\NA\PP$ then there is $\rho'$ such that $\rho=\rho'\setminus\{a\}$ and $\rho' \vdash \PP.$
\item If $\rho  \vdash \send\NA\role\msg\NB. \PP$ then there is $\rho'$ such that $\rho=\rho'\cup\{a\}$ and $\rho' \vdash \PP.$
\item If $\rho \vdash \receive\NA\role\msg{x}. \PP$ then there is $\rho'$ such that $\rho=\rho'\cup\{\NA\}$ and $\NX\not\in\rho'$ and $\rho' \vdash \PP.$
\item If $\rho \vdash \sauth\NA\role\msg\NB. \PP$ then there is $\rho'$ such that $\rho=\rho'\cup\{\NA,\NB\}$ and $\NB\not\in \rho'$ and $\rho' \vdash \PP.$
\item If $\rho  \vdash \rauth\NA\role\msg\NB. \PP$ then there is $\rho'$ such that $\rho=(\rho'\setminus \{\NB\})\cup\{\NA\}$ and $\rho' \vdash \PP.$
\end{enumerate}
\end{lemma}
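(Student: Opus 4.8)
The plan is to prove each of the eight items by inspecting the typing rules in Table~\ref{tab:typing}. The key structural observation is that each syntactic form of a process --- $\inact$, $\PP \parop \PQ$, $\rest\NA\PP$, $\scope\NA\PP$, and each of the four prefixed forms $\prefix.\PP$ --- is the conclusion of exactly one typing rule, and that rule's conclusion is a \emph{pattern} that determines $\PP$ (or its immediate subterms) uniquely. Hence, given a derivation of $\rho \vdash \PP$ with $\PP$ of a fixed syntactic shape, the last rule applied must be the corresponding rule, and the premises of that rule give precisely the statement we want. So the proof is ``by case analysis on the last rule of the derivation'', and in each case there is only one possibility.

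Concretely, I would proceed item by item. For item~1, the only rule whose conclusion has the form $\cdot \vdash \inact$ is $\rulename{(tstop)}$, which forces $\rho = \emptyset$. For item~2, the only rule concluding $\cdot \vdash \PP \parop \PQ$ is $\rulename{(tpar)}$; reading off its premises gives $\rho_1, \rho_2$ with $\rho = \rho_1 \cup \rho_2$, $\rho_1 \vdash \PP$, $\rho_2 \vdash \PQ$. For item~3 ($\rest\NA\PP$), the unique applicable rule is $\rulename{(tnew)}$, whose premises are $\rho \vdash \PP$ and $\NA \notin \rho$. For item~4 ($\scope\NA\PP$), the rule is $\rulename{(tauth)}$: its conclusion is $\rho' \setminus \{\NA\} \vdash \scope\NA\PP$ from premise $\rho' \vdash \PP$, so taking $\rho' $ to be the set appearing in that premise gives $\rho = \rho' \setminus \{\NA\}$ and $\rho' \vdash \PP$. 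Items~5--8 are analogous, using $\rulename{(tsend)}$, $\rulename{(trecv)}$, $\rulename{(tdeleg)}$, $\rulename{(trecp)}$ respectively, in each case reading the side-conditions ($\NX \notin \rho'$ for $\rulename{(trecv)}$, $\NB \notin \rho'$ for $\rulename{(tdeleg)}$) directly off the premises.

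There is essentially no real obstacle here, since the type system is fully syntax-directed. The one point requiring a little care is that the conclusions of $\rulename{(tauth)}$, $\rulename{(tsend)}$, and $\rulename{(trecp)}$ express the typing context of the \emph{conclusion} as a function of the context of the \emph{premise} (e.g.\ $\rho' \setminus \{\NA\}$, or $\rho' \cup \{\NA\}$), so the inversion statement must existentially quantify over the premise's context $\rho'$ rather than simply asserting a fixed $\rho$ works; the phrasing of items 4, 5, 7, 8 already reflects this. A second minor point: one should note that no two distinct rules can conclude the same judgment shape --- this is immediate because the process terms in the conclusions of distinct rules have distinct top-level constructors (and the four prefix rules are distinguished by which of $\send{}{}{}{}$, $\receive{}{}{}{}$, $\sauth{}{}{}{}$, $\rauth{}{}{}{}$ appears), so the case analysis is exhaustive and non-overlapping. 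With these remarks in place, each item follows immediately, and the lemma needs no induction --- only a single inversion step on the final rule of the derivation.
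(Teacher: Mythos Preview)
Your proposal is correct and is the standard approach to inversion lemmas for syntax-directed type systems. The paper itself does not provide a proof for this lemma --- it is stated without proof, implicitly treated as immediate from the shape of the typing rules --- so your case analysis on the last rule of the derivation is exactly the expected argument, and there is nothing to compare against.
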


\begin{lemma}[Subject Congruence] 
\label{lem:subcong}
If $\rho \vdash \PP$ and $\PP \equiv \PQ$ then $\rho \vdash \PQ$.
\end{lemma}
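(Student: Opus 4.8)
The plan is to prove Subject Congruence by induction on the derivation of $\PP \equiv \PQ$. Since $\equiv$ is defined as the least equivalence relation closed under the rules of Table~\ref{tab:structural} (and, implicitly, under the process constructors, so that it is a congruence), I would structure the argument in three layers: (i) the base axioms of Table~\ref{tab:structural}; (ii) the equivalence-relation closure rules (reflexivity, symmetry, transitivity); and (iii) the congruence-closure rules for the constructors $\parop$, $\rest{\cdot}$, $\scope{\cdot}$, and prefixing. For layer (ii), reflexivity is trivial, transitivity chains two applications of the induction hypothesis, and symmetry requires proving the ``if and only if'' strengthening — so in fact I would prove the stronger statement that $\PP \equiv \PQ$ implies ($\rho \vdash \PP$ iff $\rho \vdash \PQ$), which makes the symmetry case immediate. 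For layer (iii), each constructor case is a routine unfolding: apply the Inversion Lemma (Lemma~\ref{lem:inversion}) to decompose the typing of the compound process, apply the induction hypothesis to the congruent subcomponents, and reassemble using the corresponding typing rule; the side conditions on $\rulename{tnew}$ and $\rulename{tdeleg}$ are preserved because $\equiv$-related processes have the same free names.

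The heart of the proof is layer (i): checking that each axiom in Table~\ref{tab:structural} preserves typing in both directions. For the seven ``static $\pi$-calculus'' axioms this is standard set-theoretic bookkeeping on the label $\rho$: e.g. $\PP \parop \inact \equiv \PP$ follows because $\rho \vdash \PP$ gives $\rho \cup \emptyset \vdash \PP \parop \inact$ via $\rulename{tpar}$ and $\rulename{tstop}$, and conversely by Inversion; commutativity and associativity of $\parop$ reflect commutativity and associativity of $\cup$; $\rest\NA\inact \equiv \inact$ is immediate since $\NA \notin \emptyset$; the scope-extrusion axiom $\PP \parop \rest\NA\PQ \equiv \rest\NA(\PP \parop \PQ)$ with $\NA \notin \fn\PP$ uses Proposition~\ref{pro:freenames} to guarantee $\NA$ is not in the part of $\rho$ contributed by $\PP$; and the $\alpha$-conversion rule is handled by the usual observation that typing is invariant under renaming of bound names (only names bound by $\rest{\cdot}$ are $\alpha$-convertible, and the label $\rho$ lists only free/unauthorized names). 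For the four new authorization-scope axioms I would check: swapping $\scope\NA\scope\NB\PP \equiv \scope\NB\scope\NA\PP$ preserves $\rho$ since $(\rho' \setminus \{\NA\}) \setminus \{\NB\} = (\rho' \setminus \{\NB\}) \setminus \{\NA\}$; $\scope\NA\inact \equiv \inact$ since $\emptyset \setminus \{\NA\} = \emptyset$; distribution $\scope\NA(\PP \parop \PQ) \equiv \scope\NA\PP \parop \scope\NA\PQ$ since $(\rho_1 \cup \rho_2) \setminus \{\NA\} = (\rho_1 \setminus \{\NA\}) \cup (\rho_2 \setminus \{\NA\})$; and $\scope\NA\rest\NB\PP \equiv \rest\NB\scope\NA\PP$ (with $\NA \neq \NB$) since set difference and the side condition $\NB \notin \rho$ commute appropriately — here one must check that $\NB \notin \rho \setminus \{\NA\}$ iff $\NB \notin \rho$, which holds because $\NA \neq \NB$.

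The step I expect to be the main obstacle, or at least the most delicate, is the distribution axiom $\scope\NA(\PP \parop \PQ) \equiv \scope\NA\PP \parop \scope\NA\PQ$ read from right to left combined with the interaction of the $\rho$-labels: given $\rho_1 \vdash \PP$ and $\rho_2 \vdash \PQ$, the left side types as $(\rho_1 \cup \rho_2) \setminus \{\NA\}$ while the right side types as $(\rho_1 \setminus \{\NA\}) \cup (\rho_2 \setminus \{\NA\})$ — these are equal, but one has to be careful that the Inversion Lemma for $\parop$ only guarantees \emph{some} decomposition $\rho = \rho_1 \cup \rho_2$, not a canonical one, so when going back through $\rulename{tauth}$ and $\rulename{tpar}$ the chosen witnesses must be reconciled. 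A related subtlety is that because $\rho$ is a \emph{set} (not a multiset), the idempotence of $\cup$ silently absorbs the fact that $\NA$ might legitimately appear as an unauthorized name in both $\PP$ and $\PQ$; I would note explicitly that this is sound precisely because $\rho \vdash \PP$ only records \emph{whether} an authorization is needed, not how many, and the congruence axioms are chosen (recall the discussion rejecting $\scope\NA\PP \equiv \scope\NA\scope\NA\PP$) to be compatible with this reading. Once these set manipulations are laid out carefully, the remaining cases are mechanical, and the lemma follows.
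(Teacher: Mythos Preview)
Your proposal is correct and follows essentially the same route as the paper: induction on the derivation of $\PP \equiv \PQ$, using the Inversion Lemma to decompose the typing judgment and Proposition~\ref{pro:freenames} for the scope-extrusion side condition, with the authorization-scope axioms reduced to elementary set identities on $\rho$. The paper's proof is only a sketch showing two representative cases (scope extrusion and $\scope{\NA}\rest{\NB}\PP \equiv \rest{\NB}\scope{\NA}\PP$), whereas you give the full structure; your anticipated ``main obstacle'' for the distribution axiom in fact resolves cleanly via the identity $(\rho_1 \cup \rho_2)\setminus\{\NA\} = (\rho_1\setminus\{\NA\}) \cup (\rho_2\setminus\{\NA\})$, with no genuine witness-reconciliation issue.
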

\begin{proof}
(by induction on the depth of the derivation of $\PP \equiv \PQ$)
\\\\
We only write the following two interesting cases, and other cases can be obtained by similar reasoning.
\\\\
Case $\PP\parop\rest\NA\PQ\equiv\rest\NA(\PP\parop\PQ)$ and $\NA\notin\fn\PP:$
\\\\
From $\rho \vdash \PP\parop\rest\NA\PQ,$ by Lemma~\ref{lem:inversion}.~2, there are $\rho_1$ and $\rho_2$ such that 
$\rho_1 \vdash \PP$ and $\rho_2 \vdash \rest\NA\PQ.$
Therefore, by Lemma~\ref{lem:inversion}.~3, $\rho_2\vdash\PQ$ and $\NA\not\in\rho_2.$
Since $\NA\notin \fn\PP$ we conclude by Proposition~\ref{pro:freenames} that $\NA\notin\rho_1.$
By the rule $\rulename{(tpar)}$ we get that $\rho_1\cup\rho_2\vdash \PP\parop\PQ,$ and from $\NA\not\in \rho_1\cup\rho_2,$  by $\rulename{(tnew)}$ we derive
$\rho\vdash\rest\NA(\PP\parop\PQ).$
\\\\
Case $\scope{\NA}\rest{\NB}\PP \equiv \rest{\NB}\scope{\NA}\PP$ and  $\NA\neq \NB:$
\\\\
If $\rho\vdash\scope{\NA}\rest{\NB}\PP$ then by Lemma~\ref{lem:inversion}.~4 there is $\rho'$ such that $\rho'\vdash \rest{\NB}\PP$ and $\rho=\rho'\setminus\{\NA\}.$
By Lemma~\ref{lem:inversion}.~3, $\rho'\vdash\PP$ and $\NB\not\in\rho'$ (and so $\NB\not\in\rho'\setminus\{\NA\}$). Hence, by $\rulename{(tauth)}$ and $\rulename{(tnew)},$ we get $\rho \vdash \rest{\NB}\scope{\NA}\PP.$
\end{proof}

\begin{lemma}[Substitution]
\label{lem:subst}
If $\rho \vdash \PP$ then $\rho \subst{a}{b}Ê\vdash \PP \subst{a}{b}$.
\end{lemma}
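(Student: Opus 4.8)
The plan is to proceed by induction on the depth of the derivation of $\rho \vdash \PP$, mirroring the structure used in Proposition~\ref{pro:freenames}. For each typing rule, I would apply the induction hypothesis to the premise(s), perform the substitution, and then re-apply the same typing rule, checking that the side conditions still hold after substitution. Throughout, I write $\subst{a}{b}$ for the substitution replacing $b$ by $a$; the key elementary facts I would use are that $(\rho_1 \cup \rho_2)\subst{a}{b} = \rho_1\subst{a}{b} \cup \rho_2\subst{a}{b}$, that $(\rho \setminus \{c\})\subst{a}{b}$ relates to $\rho\subst{a}{b} \setminus \{c\subst{a}{b}\}$, and that set membership is (almost) preserved: $c \notin \rho$ implies $c\subst{a}{b} \notin \rho\subst{a}{b}$ \emph{except} when $c = a$ and $b \in \rho$, a subtlety that needs care in the rules with negative side conditions.

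The base case $\emptyset \vdash \inact$ is immediate since $\inact\subst{a}{b} = \inact$ and $\emptyset\subst{a}{b} = \emptyset$. The cases for $\rulename{(tpar)}$ and $\rulename{(tauth)}$ are routine: distribute the substitution over the union (resp.\ the set difference, using $\{c\}\subst{a}{b} = \{c\subst{a}{b}\}$), apply the induction hypothesis, re-apply the rule. For $\rulename{(tsend)}$ and $\rulename{(tdeleg)}$ one similarly pushes the substitution through $\rho \cup \{c\}$ (with $c$ among $\NA,\NB$), noting that $(\send\NA\role\msg\NB.\PP)\subst{a}{b} = \send{\NA\subst{a}{b}}{\role}{\msg}{\NB\subst{a}{b}}.(\PP\subst{a}{b})$, and likewise for send-authorization.

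The delicate cases are the three rules carrying a \emph{negative} side condition on a bound-like name, namely $\rulename{(tnew)}$ ($\NA \notin \rho$), $\rulename{(trecv)}$ ($x \notin \rho$), and $\rulename{(tdeleg)}$ ($\NB \notin \rho'$), together with $\rulename{(trecp)}$ whose set-difference interacts with substitution. For $\rulename{(tnew)}$: from $\rho \vdash \rest\NA\PP$ we have $\rho \vdash \PP$ and $\NA \notin \rho$; since $\rest{\NA}$ is a genuine binder, we may assume by $\alpha$-conversion that $\NA \notin \{a,b\}$, so that $(\rest\NA\PP)\subst{a}{b} = \rest\NA(\PP\subst{a}{b})$, the induction hypothesis gives $\rho\subst{a}{b} \vdash \PP\subst{a}{b}$, and $\NA \notin \rho$ yields $\NA \notin \rho\subst{a}{b}$ (here the exceptional case cannot arise precisely because $\NA \neq a$). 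The same $\alpha$-renaming trick handles the bound input variable $x$ in $\rulename{(trecv)}$: assume $x \notin \{a,b\}$, so $x \notin \rho$ transfers to $x \notin \rho\subst{a}{b}$. This is the step I expect to be the main obstacle — the statement as written substitutes into \emph{free} names, so one must be careful that it is only legitimate when $a,b$ avoid the binders, or alternatively state and use the lemma for substitutions avoiding captured names; I would make the $\alpha$-conversion convention explicit at the outset of the proof.

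For $\rulename{(trecp)}$, from $\rho = (\rho' \setminus \{\NB\}) \cup \{\NA\}$ and $\rho' \vdash \PP$, I apply the induction hypothesis to get $\rho'\subst{a}{b} \vdash \PP\subst{a}{b}$ and then re-apply $\rulename{(trecp)}$ to obtain $(\rho'\subst{a}{b} \setminus \{\NB\subst{a}{b}\}) \cup \{\NA\subst{a}{b}\} \vdash \rauth{\NA\subst{a}{b}}{\role}{\msg}{\NB\subst{a}{b}}.(\PP\subst{a}{b})$; it then remains to check $\big((\rho' \setminus \{\NB\}) \cup \{\NA\}\big)\subst{a}{b} = (\rho'\subst{a}{b} \setminus \{\NB\subst{a}{b}\}) \cup \{\NA\subst{a}{b}\}$, which holds since union commutes with substitution and, because $\{\NA\}$ is re-added afterwards, the potential mismatch in $(\rho' \setminus \{\NB\})\subst{a}{b}$ versus $\rho'\subst{a}{b}\setminus\{\NB\subst{a}{b}\}$ (which occurs only if $\NB = b$ and $a \in \rho'$) is absorbed — I would verify this small set-theoretic identity carefully, as it is the one place where $\rulename{(trecp)}$'s lack of a side condition actually matters. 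Assembling all cases completes the induction.
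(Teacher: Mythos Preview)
Your approach---induction on the typing derivation, mirroring the paper---is exactly what the paper does, and your handling of $\rulename{(tnew)}$ and $\rulename{(trecv)}$ via $\alpha$-conversion of the genuine binders is correct. However, the remaining ``delicate'' cases do not go through. First, $\rulename{(tauth)}$ is not routine: the identity $(\rho\setminus\{c\})\subst{a}{b}=\rho\subst{a}{b}\setminus\{c\subst{a}{b}\}$ that you implicitly use is \emph{false} when the substitution causes a collision (e.g.\ $c=b$ and $a\in\rho$, or $c=a$ and $b\in\rho$). Second, in $\rulename{(tdeleg)}$ the delegated name $\NB$ is \emph{not} a binder, so $\alpha$-conversion is unavailable; the side condition $\NB\subst{a}{b}\notin\rho\subst{a}{b}$ need not follow from $\NB\notin\rho$. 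Third, the $\rulename{(trecp)}$ identity you plan to ``verify carefully'' also fails: with $\rho'=\{a\}$, $\NB=b$, and $\NA$ distinct from $a,b$, one gets $((\rho'\setminus\{\NB\})\cup\{\NA\})\subst{a}{b}=\{a,\NA\}$ but $(\rho'\subst{a}{b}\setminus\{\NB\subst{a}{b}\})\cup\{\NA\subst{a}{b}\}=\{\NA\}$, and there is no weakening to bridge the gap.

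These are not merely presentational issues: the lemma, as stated, is false. Take distinct names $a,b,c,d$ and $\PP=\ssauth{c}{b}.\ssend{a}{d}.\inact$. Then $\{a\}\vdash\ssend{a}{d}.\inact$ and, since $b\notin\{a\}$, rule $\rulename{(tdeleg)}$ gives $\{a,b,c\}\vdash\PP$. Applying $\subst{a}{b}$ yields $\PP\subst{a}{b}=\ssauth{c}{a}.\ssend{a}{d}.\inact$ and $\{a,b,c\}\subst{a}{b}=\{a,c\}$; but by inversion on $\rulename{(tdeleg)}$ any derivation of $\{a,c\}\vdash\ssauth{c}{a}.\ssend{a}{d}.\inact$ would require some $\rho'\subseteq\{c\}$ with $\rho'\vdash\ssend{a}{d}.\inact$, which is impossible since the only typing of $\ssend{a}{d}.\inact$ is $\{a\}$. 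The paper's own proof sketch (which shows only the $\rulename{(tdeleg)}$ case) makes exactly this mistake: it applies $\rulename{(tdeleg)}$ to conclude $\rho\cup\{a,c\}\vdash\ssauth{c}{a}.\PP\subst{a}{b}$ without checking the side condition $a\notin\rho$. So your proof cannot be completed as planned; at best one can prove a restricted version (e.g.\ assuming the target name $a$ is fresh for suitable subterms, or reformulating the typing to admit weakening) sufficient for the use in Subject Reduction.
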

\begin{proof}
(by induction on the depth of the derivation of $\rho \vdash \PP$)
\\\\
We give only one interesting case.
If $\rho\cup\{\NB,\NC\}\vdash\sauth\NC\role\msg\NB. \PP$ is derived from $\rho\vdash\PP$ and $\NB\not\in\rho.$
It holds that $(\rho\cup\{\NB,\NC\})\subst{\NA}{\NB}=\rho\cup\{\NA,\NC\}$ and 
$(\sauth\NC\role\msg\NB.\PP)\subst{\NA}{\NB}=\sauth\NC\role\msg\NA.\PP\subst{\NA}{\NB}.$
By induction hypothesis, $\rho\subst{\NA}{\NB}\vdash\PP\subst{\NA}{\NB},$ and by the rule $(\rulename{tdeleg}),$ $\rho \cup\{\NA,\NC\}\vdash \sauth\NC\role\msg\NA.\PP\subst{\NA}{\NB}.$
\end{proof}

We may now state our soundness result which ensures typing is preserved under reduction.
\begin{theorem}[Subject reduction] 
\label{the:subred}
If $\rho \vdash \PP$ and $\PP \red \PQ$ then $\rho \vdash \PQ$.
\end{theorem}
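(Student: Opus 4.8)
The plan is to proceed by induction on the derivation of $\PP \red \PQ$, following the structure of the reduction rules in Table~\ref{tab:red}. The four context-closure rules $\rulename{(parc)}$, $\rulename{(newc)}$, $\rulename{(autc)}$, and $\rulename{(stru)}$ are handled uniformly: for $\rulename{(stru)}$ we invoke Subject Congruence (Lemma~\ref{lem:subcong}) on both sides of the reduction, combined with the induction hypothesis on the inner reduction; for the three static-context rules we apply the relevant Inversion Lemma clause (Lemma~\ref{lem:inversion}.2, .3, .4) to decompose the typing of the enclosing term, apply the induction hypothesis to the reducing subterm, and then reassemble the derivation with the matching typing rule $\rulename{(tpar)}$, $\rulename{(tnew)}$, $\rulename{(tauth)}$. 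In the $\rulename{(newc)}$ case one must check that the side condition $\NA \notin \rho$ is preserved, which is immediate since the reduct is typed under the same $\rho$.

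The two base cases $\rulename{(comm)}$ and $\rulename{(auth)}$ are where the real work lies. For $\rulename{(comm)}$, I would start from $\rho \vdash \scope{\vec{\NA}_1}\scope{\NB}\send\NB\role\msg\NC.\PP \parop \scope{\vec{\NA}_2}\scope{\NB}\receive\NB\role\msg{x}.\PQ$ and repeatedly apply Inversion (clauses for $\parop$, then for the stacks of $\scope{\cdot}$, then for output and input) to extract sets $\rho_P$ with $\rho_P \vdash \PP$ and $\rho_Q$ with $\rho_Q \vdash \PQ$, together with the bookkeeping equations relating $\rho$ to $\rho_P$, $\rho_Q$, $\vec{\NA}_1$, $\vec{\NA}_2$, $\NB$, $\NC$ — in particular that $\NB$ contributed to each side's unauthorized set gets cancelled by the corresponding $\scope{\NB}$, and the input side gives $x \notin \rho_Q'$ for the appropriate intermediate set. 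I then need $\rho_Q\subst{\NC}{x} \vdash \PQ\subst{\NC}{x}$, which is exactly the Substitution Lemma (Lemma~\ref{lem:subst}). Re-deriving the typing of the reduct $\scope{\vec{\NA}_1}\scope{\NB}\PP \parop \scope{\vec{\NA}_2}\scope{\NB}\PQ\subst{\NC}{x}$ then amounts to rebuilding the same $\scope{\cdot}$ stacks (via $\rulename{(tauth)}$) over $\PP$ and over $\PQ\subst{\NC}{x}$ and recombining with $\rulename{(tpar)}$; the key arithmetic fact to verify is that the union of the two resulting unauthorized sets is again $\rho$ — this holds because removing the $\send\NB\cdot\cdot$ and $\receive\NB\cdot\cdot$ prefixes only drops occurrences of $\NB$ from the two sets, and those occurrences were already being cancelled by the innermost $\scope{\NB}$ on each side, so the net unauthorized set is unchanged (using also that $x$ does not occur unauthorized in $\PQ$, so the substitution does not introduce new unauthorized names beyond those already present).

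For $\rulename{(auth)}$ the shape is analogous but with the characteristic authorization shift: on the sender side, Inversion through $\rulename{(tdeleg)}$ gives $\rho_P \vdash \PP$ with the sender's set containing $\{\NB,\NC\}$ and $\NC \notin \rho_P$, while the two enclosing $\scope{\NB}\scope{\NC}$ cancel both; on the receiver side, Inversion through $\rulename{(trecp)}$ gives $\rho_Q \vdash \PQ$ where $\NC$ has been removed from $\rho_Q$'s contribution and $\NB$ added, with one $\scope{\NB}$ cancelling that $\NB$. In the reduct $\scope{\vec{\NA}_1}\scope{\NB}\PP \parop \scope{\vec{\NA}_2}\scope{\NB}\scope{\NC}\PQ$ the sender has shed one $\scope{\NC}$ and dropped the $\sauth\NB\cdot\NC$ prefix, while the receiver has gained a $\scope{\NC}$ and dropped the $\rauth\NB\cdot\NC$ prefix; I re-derive both sides with $\rulename{(tauth)}$ over $\PP$ and over $\scope{\NC}\PQ$ and recombine with $\rulename{(tpar)}$, checking once more that the union of unauthorized sets equals $\rho$. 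The main obstacle I anticipate is purely this set-theoretic accounting: carefully tracking which name is added/removed by each of the possibly-empty $\scope{\cdot}$ layers and each prefix, and confirming that the delegation does not create or destroy any unauthorized name other than shuffling $\NC$ from the sender's continuation into the receiver's continuation — i.e., that authorization on $\NC$ is genuinely conserved rather than duplicated or lost. Handling the empty-sequence edge cases for $\vec{\NA}_1$, $\vec{\NA}_2$ uniformly (they are vacuous applications of $\rulename{(tauth)}$) will keep the argument clean.
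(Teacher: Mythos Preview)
Your proposal is correct and follows essentially the same approach as the paper's own proof: induction on the reduction derivation, with the context-closure cases handled by Inversion plus the induction hypothesis (and Subject Congruence for $\rulename{(stru)}$), and the two axiom cases $\rulename{(comm)}$ and $\rulename{(auth)}$ handled by peeling off the parallel, the $\scope{\cdot}$ stacks, and the prefix via the Inversion Lemma, invoking Substitution for $\rulename{(comm)}$, and then rebuilding with $\rulename{(tauth)}$ and $\rulename{(tpar)}$ under the same $\rho_1,\rho_2$. The set-theoretic bookkeeping you flag (e.g.\ $x\notin\rho_Q$ so the substitution leaves the set unchanged, and $\NC\notin\rho_P$ on the sender side so dropping $\scope{\NC}$ is harmless) is exactly what the paper verifies.
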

\begin{proof}
(by induction on the depth of the derivation of $\PP \red \PQ$)
\\\\
\underline{Base case 1:}
Assume that $\rho\vdash  \scope{\vec{\NA}_1}\scope{\NB} \send\NB\role\msg{\NC}.\PP \parop \scope{\vec{\NA}_2}\scope{\NB}\receive\NB\role\msg{x}.\PQ$ and 
\[
\rulename{(comm)}
\qquad
\scope{\vec{\NA}_1}\scope{\NB} \send\NB\role\msg{\NC}.\PP \parop \scope{\vec{\NA}_2}\scope{\NB}\receive\NB\role\msg{x}.\PQ
\red
\scope{\vec{\NA}_1}\scope{\NB} \PP \parop \scope{\vec{\NA}_2}\scope{\NB}\PQ\subst{\NC}{x}.
\]
By Lemma~\ref{lem:inversion}.~2, there are $\rho_1$ and $\rho_2$ such that $\rho=\rho_1\cup\rho_2$ and 
 \[
 \rho_1 \vdash  \scope{\vec{\NA}_1}\scope{\NB} \send\NB\role\msg{\NC}.\PP \quad \mbox{ and } \quad  \rho_2\vdash\scope{\vec{\NA}_2}\scope{\NB}\receive\NB\role\msg{x}.\PQ.
 \]
By consecutive application of Lemma~\ref{lem:inversion}.~4, there are $\rho_1'$ and $\rho_2'$ such that $\rho_1=\rho_1'\setminus \{\vec{a_1}, \NB\}$ and $\rho_2=\rho_2'\setminus \{\vec{a_2}, \NB\}$ and 
\[
  \rho_1' \vdash \send\NB\role\msg{\NC}.\PP \quad \mbox{ and } \quad  \rho_2' \vdash \receive\NB\role\msg{x}.\PQ.
\]
By Lemma~\ref{lem:inversion}.~5-6, there are $\rho_1''$ and $\rho_2''$ such that $\rho_1'=\rho_1''\cup\{\NB\}$ and $\rho_2'=\rho_2''\cup\{\NB\}$ and $\NX\not\in\rho_2''$ and
\[
  \rho_1'' \vdash \PP \quad \mbox{ and } \quad  \rho_2'' \vdash \PQ.
\]
One should notice that $\rho_1=(\rho_1''\cup\{\NB\})\setminus\{\vec{a_1},\NB\}=\rho_1''\setminus\{\vec{a_1},\NB\}$ and 
$\rho_2=(\rho_2''\cup\{\NB\})\setminus\{\vec{a_2},\NB\}=\rho_2''\setminus\{\vec{a_2},\NB\}$ and $\rho_2''\subst{c}{x}=\rho_2''$ (since $\NX\not\in\rho_2'').$
By Lemma~\ref{lem:subst} and consecutive application of the typing rules $\rulename{(tauth)}$ we get
\[
 \rho_1 \vdash  \scope{\vec{\NA}_1}\scope{\NB} \PP \quad \mbox{ and } \quad  \rho_2\vdash\scope{\vec{\NA}_2}\scope{\NB}.\PQ\subst{\NC}{\NX}.
\]
and finally, by $\rulename{(tpar)},$ we have  $\rho \vdash  \scope{\vec{\NA}_1}\scope{\NB} \PP \parop \scope{\vec{\NA}_2}\scope{\NB}.\PQ\subst{\NC}{\NX}.$
\\\\
\underline{Base case 2:} Assume that $\rho\vdash  \scope{\vec{\NA}_1}\scope{\NB}\scope{\NC} \sauth\NB\role\msg{\NC}.\PP \parop \scope{\vec{\NA}_2}\scope{\NB}\rauth\NB\role\msg\NC.\PQ$ and 
\[
\rulename{(auth)}
\qquad
\scope{\vec{\NA}_1}\scope{\NB}\scope{\NC} \sauth\NB\role\msg{\NC}.\PP \parop \scope{\vec{\NA}_2}\scope{\NB}\rauth\NB\role\msg\NC.\PQ
\red
\scope{\vec{\NA}_1}\scope{\NB} \PP \parop \scope{\vec{\NA}_2}\scope{\NB}\scope{\NC}\PQ
\]
By Lemma~\ref{lem:inversion}.~2, there are $\rho_1$ and $\rho_2$ such that $\rho=\rho_1\cup\rho_2$ and 
\[
  \rho_1 \vdash \scope{\vec{\NA}_1}\scope{\NB}\scope{\NC} \sauth\NB\role\msg{\NC}.\PP 
  \quad\mbox{ and }\quad
  \rho_2 \vdash \scope{\vec{\NA}_2}\scope{\NB}\rauth\NB\role\msg\NC.\PQ
\]
By consecutive application of Lemma~\ref{lem:inversion}.~4, there are $\rho_1'$ and $\rho_2'$ such that $\rho_1=\rho_1'\setminus \{\vec{a_1}, \NB, \NC \}$ and $\rho_2=\rho_2'\setminus \{\vec{a_2}, \NB\}$ and 
\[
  \rho_1' \vdash \sauth\NB\role\msg{\NC}.\PP 
  \quad\mbox{ and }\quad
  \rho_2' \vdash \rauth\NB\role\msg\NC.\PQ
\]
By Lemma~\ref{lem:inversion}.~7-8, there are $\rho_1''$ and $\rho_2''$ such that $\rho_1'=\rho_1''\cup\{\NB, \NC\}$ and $\NC\not\in\rho_1''$ and 
$\rho_2'=(\rho_2''\setminus \{\NC\}) \cup\{\NB\}$  and
\[
  \rho_1'' \vdash \PP 
  \quad\mbox{ and }\quad
  \rho_2'' \vdash \PQ.
\]
We conclude that $\rho_1=(\rho_1''\cup\{\NB,\NC\}) \setminus\{\vec{a_1},\NB,\NC\}=\rho_1''\setminus \{\vec{a_1},\NB\}$ (since $\NC\not\in \rho_1''$) and $\rho_2=\rho_2''\setminus \{\vec{a_2},\NB,\NC\}.$ By consecutive application of the typing rule $\rulename{(tauth)},$ we get
\[
  \rho_1 \vdash \scope{\vec{\NA}_1}\scope{\NB} \PP  
  \quad \mbox{ and }\quad  
  \rho_2 \vdash \scope{\vec{\NA}_2}\scope{\NB}\scope{\NC}\PQ
\]
and by $\rulename{(tpar)}$
\[
  \rho \vdash \scope{\vec{\NA}_1}\scope{\NB} \PP \parop \scope{\vec{\NA}_2}\scope{\NB}\scope{\NC}\PQ.
\]
\\\\
\underline{Case 3:}
Assume that $\rho \vdash \PP \parop \PR$ and $\PP \parop \PR \red \PQ \parop \PR$ is derived from $\PP\red\PQ.$ By Lemma~\ref{lem:inversion}.~2, there are $\rho_1$ and $\rho_2$ such that $\rho=\rho_1\cup\rho_2$ and
\[
  \rho_1 \vdash \PP \quad \mbox{ and } \quad \rho_2 \vdash \PR.
\]
By induction hypothesis, it holds that $\rho_1\vdash \PQ$ and therefore we have, by $\rulename{(tpar)},$ that $\rho \vdash \PQ\parop\PR.$
\\\\
\underline{Case 4:}
Assume that $\rho \vdash \rest\NA \PP$ and $\rest\NA \PP \red \rest\NA \PQ$ is derived from $\PP\red\PQ.$ By Lemma~\ref{lem:inversion}.~3, it holds that $\NA\not\in\rho$ and $\rho \vdash \PP.$ 
By induction hypothesis, it holds that $\rho\vdash \PQ$ and therefore, by $\rulename{(tnew)},$ we get $\rho \vdash \rest\NA\PQ.$
\\\\
\underline{Case 5:}
Assume that $\rho \vdash \scope\NA \PP$ and $\scope\NA \PP \red \scope\NA \PQ$ is derived from $\PP\red\PQ.$ By Lemma~\ref{lem:inversion}.~4, it holds that there is $\rho'$ such that $\rho=\rho'\setminus\{\NA\}$ and $\rho' \vdash \PP.$ By induction hypothesis, it holds that $\rho'\vdash \PQ$ and therefore, by $\rulename{(tauth)},$ we get $\rho \vdash \scope\NA\PQ.$
\\\\
\underline{Case 6:}
Assume that $\rho\vdash\PP$ and $\PP \red \PQ$ is derived from $\PP' \red \PQ',$  where $\PP \equiv \PP'$ and $ \PQ' \equiv \PQ.$
We conclude by Lemma~\ref{lem:subcong} that $\rho\vdash \PP'.$ Than, by induction hypothesis, we get $\rho\vdash \PQ',$ and applying again Lemma~\ref{lem:subcong}, we have that $\rho\vdash\PQ.$
\end{proof}

Theorem~\ref{the:subred} ensures, considering $\rho= \emptyset$, that well-typed processes
always reduce to well-typed processes. We now express the basic property for well-typed systems, namely
that they do not expose any authorization errors up to the ones granted by pending authorizations $\rho$.
\begin{proposition}[Error Free]
\label{pro:errorfree}
If $\rho \vdash \context[\alpha_\NA.\PQ]$ and $\NA \not \in \rho$ 
then $\auth{\context[\cdot]}{\NA}$.
\end{proposition}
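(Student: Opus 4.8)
The plan is to proceed by induction on the structure of the active context $\context[\cdot]$, peeling off one context layer at a time and tracking how the typing judgment and the set $\rho$ of pending authorizations evolve. The key observation linking the two sides is that the context-authorization predicate $\auth{\context[\cdot]}{\NA}$ becomes true exactly when some enclosing layer is an authorization scope $\scope\NA$, and dually, each such scope is precisely what removes $\NA$ from the unauthorized-name set in rule $\rulename{(tauth)}$. So the proof amounts to showing: if $\NA$ is never authorized along the context (predicate false), then every $\scope\NA$ layer that the typing derivation could have used to discharge $\NA$ is absent, hence $\NA$ must remain in $\rho$ — contradicting $\NA\notin\rho$. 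Equivalently, I will prove the contrapositive-flavoured statement directly by induction: from $\rho\vdash\context[\alpha_\NA.\PQ]$ and $\auth{\context[\cdot]}{\NA}=\mathit{false}$ conclude $\NA\in\rho$.

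First I would handle the base case $\context[\cdot]=\cdot$: here $\context[\alpha_\NA.\PQ]=\alpha_\NA.\PQ$, and by the relevant clause of the Inversion Lemma (Lemma~\ref{lem:inversion}, cases 5--8, depending on which of the four prefixes $\alpha_\NA$ is) the typing derivation forces $\NA\in\rho$ in every case, since each prefix rule adds its subject channel $\NA$ to the unauthorized set. Since $\auth{\cdot}{\NA}=\mathit{false}$ holds vacuously, the base case is immediate. Then the inductive cases follow the clauses in Definition~2 (Context Authorization):
\begin{itemize}
\item If $\context[\cdot]=\scope\NB\context'[\cdot]$ with $\NA\neq\NB$: inversion on $\rulename{(tauth)}$ gives $\rho'\vdash\context'[\alpha_\NA.\PQ]$ with $\rho=\rho'\setminus\{\NB\}$; since $\auth{\context[\cdot]}{\NA}=\auth{\context'[\cdot]}{\NA}=\mathit{false}$, the IH yields $\NA\in\rho'$, and as $\NA\neq\NB$ we get $\NA\in\rho$.
\item If $\context[\cdot]=\PP\parop\context'[\cdot]$: inversion on $\rulename{(tpar)}$ gives $\rho=\rho_1\cup\rho_2$ with $\rho_2\vdash\context'[\alpha_\NA.\PQ]$; the IH gives $\NA\in\rho_2\subseteq\rho$.
\item If $\context[\cdot]=\rest\NB\context'[\cdot]$: inversion on $\rulename{(tnew)}$ gives $\rho\vdash\context'[\alpha_\NA.\PQ]$ with $\NB\notin\rho$; the IH gives $\NA\in\rho$ directly.
\end{itemize}
The case $\context[\cdot]=\scope\NA\context'[\cdot]$ does not arise, since there $\auth{\context[\cdot]}{\NA}=\mathit{true}$, contrary to hypothesis.

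This then settles clause~1 of the error definition. For the proposition as literally stated, $\auth{\context[\cdot]}{\NA}$ is exactly the conclusion, so it is proved. I do not expect a serious obstacle; the one point requiring a little care is the $\scope\NB$ case, where I must make sure the name $\NB$ removed by $\rulename{(tauth)}$ is genuinely distinct from $\NA$ — which is guaranteed precisely because the Context Authorization clause that keeps the predicate false in a $\scope\NB$ layer is the one with side condition $\NA\neq\NB$. A secondary subtlety is that the context is defined up to structural congruence only implicitly (the error definition writes $P\equiv\context[\alpha_\NA.\PQ]$), but here the statement is phrased directly on $\rho\vdash\context[\alpha_\NA.\PQ]$, so no appeal to Subject Congruence (Lemma~\ref{lem:subcong}) is needed; it would only be invoked when chaining this proposition into the final safety corollary.
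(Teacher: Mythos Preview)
Your proposal is correct and follows essentially the same route as the paper: induction on the structure of $\context[\cdot]$, with the Inversion Lemma handling each layer and the base case. The only cosmetic difference is that you phrase the induction as the contrapositive (from $\auth{\context[\cdot]}{\NA}=\mathit{false}$ derive $\NA\in\rho$) whereas the paper argues the direct implication (from $\NA\notin\rho$ derive $\auth{\context[\cdot]}{\NA}$); the case analyses coincide.
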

\begin{proof}
(by induction on the structure of $\context[\cdot]$)
\\\\
Case $\context[\cdot]=[\cdot]:$
If  $\rho \vdash \alpha_\NA.\PQ$ we conclude that $\NA\in\rho,$ by Lemma~\ref{lem:inversion}.
\\
Case $\context[\cdot]=\PP\parop \context'[\cdot]:$
If $\rho\vdash\PP\parop \context'[\alpha_\NA.\PQ],$ by Lemma~\ref{lem:inversion}.~2, there are $\rho_1$ and $\rho_2$ such that $\rho=\rho_1\cup\rho_2$ and $\rho_1\vdash \PP$ and $\rho_2 \vdash \context'[\alpha_\NA.\PQ].$ By induction hypothesis, $\auth{\context'[\cdot]}{\NA},$ while by definition
$\auth{\PP\parop\context'[\cdot]}{\NA}=\auth{\context'[\cdot]}{\NA}.$
\\
Case $\context[\cdot]=\rest\NB\context'[\cdot]:$ 
If $\NA\not\in\rho$ and $\rho\vdash \rest\NB\context'[\alpha_\NA.\PQ],$ by Lemma~\ref{lem:inversion}.~3,  $\rho\vdash\context'[\alpha_\NA.\PQ]$ and $\NB\not\in\rho.$ By induction hypothesis, $\auth{\context'[\cdot]}{\NA}.$ By definition, $\auth{\rest\NB\context'[\cdot]}{\NA}=\auth{\context'[\cdot]}{\NA}.$
\\
Case $\context[\cdot]=\scope\NB\context'[\cdot]$ and $\NA\neq\NB:$
If $\NA\not\in\rho$ and $\rho\vdash \scope\NB\context'[\alpha_\NA.\PQ],$ by Lemma~\ref{lem:inversion}.~4, there is $\rho'$ such that $\rho=\rho'\setminus\{\NB\}$ and $\rho'\vdash\context'[\alpha_\NA.\PQ].$ If $\NA\neq\NB$ and $\NA\not\in\rho$ then $\NA\not\in\rho'.$ By induction hypothesis, $\auth{\context'[\cdot]}{\NA}.$ By definition, $\auth{\scope\NB\context'[\cdot]}{\NA}=\auth{\context'[\cdot]}{\NA}.$
\\
Case $\context[\cdot]=\scope\NA\context'[\cdot]:$
By definition $\auth{\scope\NA\context'[\cdot]}{\NA}=\mathit{true}.$
\end{proof}

Proposition~\ref{pro:errorfree} thus ensures that active communication prefixes that do not involve a pending 
authorization (outside of $\rho$) are not errors. Considering $\rho = \emptyset$ we thus have that well-typed 
processes do not have any unauthorized prefixes and thus are not errors. Along with Theorem~\ref{the:subred} 
we may then state our safety result which says well-typed processes never evolve into an error.

\begin{corollary}[Type Safety]
\label{cor:safety}
If $\;\emptyset \vdash \PP$ and $\PP \red^{\star} \PQ$ then $\PQ$ is not an error.
\end{corollary}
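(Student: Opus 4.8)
The plan is to derive the corollary as a direct consequence of the two supporting results already established: Subject Reduction (Theorem~\ref{the:subred}) and Error Free (Proposition~\ref{pro:errorfree}). First I would assume $\emptyset \vdash \PP$ and $\PP \red^\star \PQ$, and argue by induction on the number of reduction steps in $\PP \red^\star \PQ$ that $\emptyset \vdash \PQ$. The base case is immediate since $\PP \equiv \PQ$ up to zero steps gives $\emptyset \vdash \PQ$ by hypothesis (or by Lemma~\ref{lem:subcong} if structural congruence is involved); the inductive step peels off one reduction $\PP \red \PP'' \red^\star \PQ$ and applies Theorem~\ref{the:subred} to get $\emptyset \vdash \PP''$, after which the induction hypothesis yields $\emptyset \vdash \PQ$. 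So the whole sequence of reductions preserves well-typedness.

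Next I would show that a well-typed process cannot be an error. Suppose for contradiction that $\PQ$ is an error. By Definition~\ref{d:error}, $\PQ \equiv \context[\alpha_\NA.\PR]$ for some active context $\context[\cdot]$, prefix $\alpha_\NA$, and process $\PR$, with either (1) $\auth{\context[\cdot]}{\NA} = \mathit{false}$, or (2) $\alpha_\NA = \sauth\NA\role\msg\NB$ and $\auth{\context[\cdot]}{\NB} = \mathit{false}$. Since $\emptyset \vdash \PQ$ and $\PQ \equiv \context[\alpha_\NA.\PR]$, Lemma~\ref{lem:subcong} gives $\emptyset \vdash \context[\alpha_\NA.\PR]$. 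In case (1), we have $\NA \notin \emptyset$ trivially, so Proposition~\ref{pro:errorfree} applies and yields $\auth{\context[\cdot]}{\NA}$, contradicting $\auth{\context[\cdot]}{\NA} = \mathit{false}$. In case (2), I would first note that $\sauth\NA\role\msg\NB.\PR$ can be rewritten as $\alpha_\NB'.\PR'$ in a suitable sense — more carefully, $\sauth\NA\role\msg\NB$ is simultaneously an action "on $\NA$" and one whose delegated name $\NB$ must be authorized; the cleanest route is to observe that Proposition~\ref{pro:errorfree} as stated already covers the name $\NA$ of $\alpha_\NA$, and then argue separately for $\NB$ by re-running the same inductive argument of Proposition~\ref{pro:errorfree} tracking $\NB$ through the typing rule $\rulename{(tdeleg)}$, which records $\NB$ in the unauthorized set unless it is under a $\scope\NB$ scope. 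Either way, $\NB \notin \emptyset$ and the context must authorize $\NB$, contradicting $\auth{\context[\cdot]}{\NB} = \mathit{false}$.

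The main obstacle I anticipate is the treatment of clause (2) of the error definition: Proposition~\ref{pro:errorfree} as literally stated only concludes $\auth{\context[\cdot]}{\NA}$ for the subject channel $\NA$ of $\alpha_\NA$, and does not directly speak about the delegated name $\NB$ in $\sauth\NA\role\msg\NB$. To close this gap cleanly I would either (a) state and prove a small companion lemma — "if $\rho \vdash \context[\sauth\NA\role\msg\NB.\PR]$ and $\NB \notin \rho$ then $\auth{\context[\cdot]}{\NB}$" — whose proof is a near-verbatim copy of the proof of Proposition~\ref{pro:errorfree} with $\NB$ in place of $\NA$ (using clause~7 of the Inversion Lemma at the hole instead of clause~5/6), or (b) remark that the existing proof of Proposition~\ref{pro:errorfree} in fact establishes, at the hole case via Lemma~\ref{lem:inversion}, that every free name that the prefix "requires an authorization for" (namely $\NA$, and additionally $\NB$ when the prefix is a send-authorization, by clause~7) lies in $\rho$; so with $\rho = \emptyset$ all required authorizations are supplied by the context. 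With that companion observation in hand, both error clauses are contradicted, so $\PQ$ is not an error, completing the proof.
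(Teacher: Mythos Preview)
Your approach is the same as the paper's: the paper's proof reads, in its entirety, ``Immediate from Theorem~\ref{the:subred} and Proposition~\ref{pro:errorfree}.'' You have simply unpacked what ``immediate'' means, and in doing so you have noticed something the paper glosses over. Proposition~\ref{pro:errorfree} as stated only yields $\auth{\context[\cdot]}{\NA}$ for the subject channel $\NA$ of $\alpha_\NA$; it says nothing directly about the delegated name $\NB$ in clause~(2) of Definition~\ref{d:error}. Your diagnosis is correct, and both of your proposed remedies work: either a companion lemma proved by the same induction (using Lemma~\ref{lem:inversion}.7 at the hole to get $\NB$ into $\rho$), or the observation that the existing proof of Proposition~\ref{pro:errorfree} already shows, at the hole, that every name the prefix rule places into $\rho$ must be authorized by the context. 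So your proof is not merely a correct expansion of the paper's argument; it patches a small elision in it.
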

\begin{proof} Immediate from Theorem~\ref{the:subred} and Proposition~\ref{pro:errorfree}.
\end{proof}

Corollary~\ref{cor:safety}  attests that well-typed systems never reduce to authorization errors, including 
when authorizations are dynamically delegated. The presented type system allows for a streamlined analysis on
process authorization requirements, which we intend to exploit as the building block for 
richer analysis. 

\section{Concluding Remarks} 
The work presented here builds on our previous work~\cite{DBLP:journals/corr/GhilezanJPPV14}, in which we explored authorization passing
in the context of communication-centered systems. In~\cite{DBLP:journals/corr/GhilezanJPPV14}, the analysis addressed not only authorization passing
but also role-based protocol specification, building on the conversation type analysis presented 
in~\cite{DBLP:conf/tgc/BaltazarCVV12}. Here we focussed exclusively on the authorization problem,
obtaining a simple model which paves the way for further investigation, since the challenges
involved may now be highlighted in a crisper way.
{As usual,} 
\vanja{there are non typable processes that are authorized for all the actions and reduce to $\inact$. 
This is unsurprising, given the simplicity of the analysis.
An example is 
\[
  \scope\NA\scope\NB(\rreceive\NB\NX.\ssend\NX\NB.\inact\parop\ssend\NB\NA.\rreceive\NA\NX.\inact).
\]
For the same reason, even \sg{though} our untyped model enables to keep existing copies of delegated authorization scopes,  the type system restricts the usage of their scopes.
For example, the current discipline can not type the process
\[
  Ê\scope\NB\scope\NA\scope\NA\ssauth\NB\NA\ssend\NA\NB.\inact \parop \scope\NB\rrauth\NB\NA\rreceive\NA\NX.\inact
\]
even \sg{though} it safely reduces to $\inact.$
}
We believe it would be interesting to
enrich the typing analysis so that it encompasses the contextual information (authorizations already
held by the process) so as to address name reception \vanja{and authorization delegation} in a different way.
We also believe it would be interesting to discipline 
authorization usage so as to ensure 
absence of double authorizations {for the sake of authorization accountability, so as to ensure only 
the strictly necessary authorizations are specified.}

Naturally, it would also be interesting
to integrate the analysis presented here in richer settings, for instance (i)~considering the need to ensure
protocol fidelity using session types~\cite{DBLP:conf/esop/HondaVK98}, or (ii)~ensuring 
liveness properties so that security critical events are guaranteed to take place, or (iii)~exploring an 
ontology on names so that authorizations to act upon higher ranked names automatically yield 
authorization for lower-level ones. 
While relevant, these extensions appear as orthogonal developments to 
the analysis presented here and therefore should be studied in depth in a dedicated way.

We briefly review some related works. Scoping operators have been widely used for the purpose of modeling security aspects (e.g.,~\cite{Giunti}) but typically they use bound names (e.g., to model  secrets).
With the aim of representing secrecy and confidentiality requirements in process specifications,
an alternative scoping operator called \emph{hide} is investigated in \cite{Giunti}.
The hide operator is embedded in the so-called secrecy $\pi$-calculus, tailored to program secrecy in 
communications. The expressiveness of the hide operator is investigated in the context of a
behavioral theory, by means of an Spy agent. 
In contrast, our (free name) scoping operator focuses on authorization, a different security concern. 
In~\cite{VivasYoshida}, 
a scoping operator (called \emph{filter}) is proposed 
for dynamic channel screening. In a different setting 
(higher-order communication) and with similar properties, the filter
operator blocks all the actions that are not contained in the corresponding
filter (which contains polarised channel names). Contrary to the
authorization scope, filters are statically assigned to processes, while the
authorization scope assigned to a process may be dynamically changed.
To the best of our knowledge, the authorization scoping proposed here 
has not been explored before for the specification of communication-centered systems.

{One key idea explored here is the separation between resource and the respective 
authorization which in particular allows us to distinguish the communication of the resource handler from
the resource authorization. Consider, e.g., process
$
 \rreceive\NA\NX. \ssend\NB\NX. \inact
$,
where a forwarding process receives a channel and forwards it without necessarily
being authorized to interact in it. This allows to model scenarios where resource handlers may
be passed around in unverified contexts, since their unauthorized use is excluded. This example
in particular distinguishes the type-based authorization handling presented 
in~\cite{DBLP:journals/jlp/GorlaP09}, since authorizations directly flow in the 
communications (via the types) and cannot be received afterwards --- we leave to future work
a comparison with more \emph{refined} typing notions such as~\cite{DBLP:conf/sefm/FrancoV13,DBLP:conf/esop/SwamyCC10}
where we conceive that dependencies between received values can address such a separation,
albeit in a more indirect way.

To further remark on the particularities of our linguistic constructors consider process
\[
 \rreceive\NA\NX. \scope\NX \PP
\]
where $\PP$ is authorized to act on channel $\NX$, regardless of the identity of the channel
actually received, which somewhat hints on the particular combination between the
(non-binding) scoping operator and name passing. 
While we do not claim that our constructs cannot be encoded in other models,  we do
believe they provide an adequate abstraction level to reason on authorization handling. Still, it would be 
interesting to see how to express authorization scopes and authorization communication
(including delegation) using $\pi$-calculus like models (such as, e.g.,~\cite{appliedpi,polipi,pigroups}).
}

There are high-level similarities between our work and the concept of \emph{ownership types},
as well studied for object-oriented languages~\cite{DBLP:conf/oopsla/ClarkePN98}. 
Although in principle ownership types focus on static ownership structures, assessing their 
use for disciplining dynamic authorizations is interesting future work.

\paragraph{\textbf{Acknowledgments.}}
We thank the anonymous referees for their insightful and useful remarks.
This work was supported by COST Action IC1201: Behavioural Types for Reliable Large-Scale Software Systems (BETTY)
via Short-Term Scientific Mission grants (to Pantovi\'c and Vieira).
P\'{e}rez is also affiliated to the NOVA Laboratory for Computer Science and Informatics (NOVA LINCS), 
Universidade Nova de Lisboa, Portugal.

\bibliographystyle{eptcs}

\begin{thebibliography}{10}
\providecommand{\bibitemdeclare}[2]{}
\providecommand{\surnamestart}{}
\providecommand{\surnameend}{}
\providecommand{\urlprefix}{Available at }
\providecommand{\url}[1]{\texttt{#1}}
\providecommand{\href}[2]{\texttt{#2}}
\providecommand{\urlalt}[2]{\href{#1}{#2}}
\providecommand{\doi}[1]{doi:\urlalt{http://dx.doi.org/#1}{#1}}
\providecommand{\bibinfo}[2]{#2}

\bibitemdeclare{inproceedings}{appliedpi}
\bibitem{appliedpi}
\bibinfo{author}{Mart{\'{\i}}n \surnamestart Abadi\surnameend} \&
  \bibinfo{author}{C{\'{e}}dric \surnamestart Fournet\surnameend}
  (\bibinfo{year}{2001}): \emph{\bibinfo{title}{Mobile values, new names, and
  secure communication}}.
\newblock In: {\sl \bibinfo{booktitle}{Conference Record of {POPL} 2001: The
  28th {ACM} {SIGPLAN-SIGACT} Symposium on Principles of Programming
  Languages}}, \bibinfo{publisher}{{ACM}}, pp. \bibinfo{pages}{104--115},
  \doi{10.1145/360204.360213}.

\bibitemdeclare{inproceedings}{DBLP:conf/tgc/BaltazarCVV12}
\bibitem{DBLP:conf/tgc/BaltazarCVV12}
\bibinfo{author}{Pedro \surnamestart Baltazar\surnameend},
  \bibinfo{author}{Lu{\'{\i}}s \surnamestart Caires\surnameend},
  \bibinfo{author}{Vasco~T. \surnamestart Vasconcelos\surnameend} \&
  \bibinfo{author}{Hugo~Torres \surnamestart Vieira\surnameend}
  (\bibinfo{year}{2012}): \emph{\bibinfo{title}{A Type System for Flexible Role
  Assignment in Multiparty Communicating Systems}}.
\newblock In: {\sl \bibinfo{booktitle}{Trustworthy Global Computing - 7th
  International Symposium, {TGC} 2012, Revised Selected Papers}}, {\sl
  \bibinfo{series}{LNCS}} \bibinfo{volume}{8191},
  \bibinfo{publisher}{Springer}, pp. \bibinfo{pages}{82--96},
  \doi{10.1007/978-3-642-41157-1\_6}.

\bibitemdeclare{article}{polipi}
\bibitem{polipi}
\bibinfo{author}{Marco \surnamestart Carbone\surnameend} \&
  \bibinfo{author}{Sergio \surnamestart Maffeis\surnameend}
  (\bibinfo{year}{2002}): \emph{\bibinfo{title}{On the Expressive Power of
  Polyadic Synchronisation in pi-calculus}}.
\newblock {\sl \bibinfo{journal}{Electr. Notes Theor. Comput. Sci.}}
  \bibinfo{volume}{68}(\bibinfo{number}{2}), pp. \bibinfo{pages}{15--32},
  \doi{10.1016/S1571-0661(05)80361-5}.

\bibitemdeclare{inproceedings}{DBLP:conf/oopsla/ClarkePN98}
\bibitem{DBLP:conf/oopsla/ClarkePN98}
\bibinfo{author}{David~G. \surnamestart Clarke\surnameend},
  \bibinfo{author}{John \surnamestart Potter\surnameend} \&
  \bibinfo{author}{James \surnamestart Noble\surnameend}
  (\bibinfo{year}{1998}): \emph{\bibinfo{title}{Ownership Types for Flexible
  Alias Protection}}.
\newblock In: {\sl \bibinfo{booktitle}{Proceedings of the 1998 {ACM} {SIGPLAN}
  Conference on Object-Oriented Programming Systems, Languages {\&}
  Applications, {OOPSLA} 1998}}, \bibinfo{publisher}{{ACM}}, pp.
  \bibinfo{pages}{48--64}, \doi{10.1145/286936.286947}.

\bibitemdeclare{article}{pigroups}
\bibitem{pigroups}
\bibinfo{author}{Silvano \surnamestart Dal{-}Zilio\surnameend} \&
  \bibinfo{author}{Andrew~D. \surnamestart Gordon\surnameend}
  (\bibinfo{year}{2002}): \emph{\bibinfo{title}{Region analysis and a
  pi-calculus with groups}}.
\newblock {\sl \bibinfo{journal}{J. Funct. Program.}}
  \bibinfo{volume}{12}(\bibinfo{number}{3}), pp. \bibinfo{pages}{229--292},
  \doi{10.1017/S0956796801004270}.

\bibitemdeclare{inproceedings}{DBLP:conf/sefm/FrancoV13}
\bibitem{DBLP:conf/sefm/FrancoV13}
\bibinfo{author}{Juliana \surnamestart Franco\surnameend} \&
  \bibinfo{author}{Vasco~Thudichum \surnamestart Vasconcelos\surnameend}
  (\bibinfo{year}{2013}): \emph{\bibinfo{title}{A Concurrent Programming
  Language with Refined Session Types}}.
\newblock In: {\sl \bibinfo{booktitle}{Software Engineering and Formal Methods
  - {SEFM} 2013 Collocated Workshops: BEAT2, WS-FMDS, FM-RAIL-Bok, MoKMaSD, and
  OpenCert, Revised Selected Papers}}, {\sl \bibinfo{series}{LNCS}}
  \bibinfo{volume}{8368}, \bibinfo{publisher}{Springer}, pp.
  \bibinfo{pages}{15--28}, \doi{10.1007/978-3-319-05032-4\_2}.

\bibitemdeclare{inproceedings}{DBLP:journals/corr/GhilezanJPPV14}
\bibitem{DBLP:journals/corr/GhilezanJPPV14}
\bibinfo{author}{Silvia \surnamestart Ghilezan\surnameend},
  \bibinfo{author}{Svetlana \surnamestart Jaksic\surnameend},
  \bibinfo{author}{Jovanka \surnamestart Pantovic\surnameend},
  \bibinfo{author}{Jorge~A. \surnamestart P{\'{e}}rez\surnameend} \&
  \bibinfo{author}{Hugo~Torres \surnamestart Vieira\surnameend}
  (\bibinfo{year}{2014}): \emph{\bibinfo{title}{Dynamic Role Authorization in
  Multiparty Conversations}}.
\newblock In: {\sl \bibinfo{booktitle}{Proceedings Third Workshop on
  Behavioural Types, {BEAT} 2014}}, {\sl \bibinfo{series}{{EPTCS}}}
  \bibinfo{volume}{162}, pp. \bibinfo{pages}{1--8}, \doi{10.4204/EPTCS.162.1}.

\bibitemdeclare{inproceedings}{Giunti}
\bibitem{Giunti}
\bibinfo{author}{Marco \surnamestart Giunti\surnameend},
  \bibinfo{author}{Catuscia \surnamestart Palamidessi\surnameend} \&
  \bibinfo{author}{Frank~D. \surnamestart Valencia\surnameend}
  (\bibinfo{year}{2012}): \emph{\bibinfo{title}{Hide and New in the
  Pi-Calculus}}.
\newblock In: {\sl \bibinfo{booktitle}{Proceedings of the Combined 19th
  International Workshop on Expressiveness in Concurrency and 9th Workshop on
  Structured Operational Semantics, {EXPRESS/SOS} 2012}}, {\sl
  \bibinfo{series}{{EPTCS}}}~\bibinfo{volume}{89}, pp. \bibinfo{pages}{65--79},
  \doi{10.4204/EPTCS.89.6}.

\bibitemdeclare{article}{DBLP:journals/jlp/GorlaP09}
\bibitem{DBLP:journals/jlp/GorlaP09}
\bibinfo{author}{Daniele \surnamestart Gorla\surnameend} \&
  \bibinfo{author}{Rosario \surnamestart Pugliese\surnameend}
  (\bibinfo{year}{2009}): \emph{\bibinfo{title}{Dynamic management of
  capabilities in a network aware coordination language}}.
\newblock {\sl \bibinfo{journal}{J. Log. Algebr. Program.}}
  \bibinfo{volume}{78}(\bibinfo{number}{8}), pp. \bibinfo{pages}{665--689},
  \doi{10.1016/j.jlap.2008.12.001}.

\bibitemdeclare{inproceedings}{DBLP:conf/esop/HondaVK98}
\bibitem{DBLP:conf/esop/HondaVK98}
\bibinfo{author}{Kohei \surnamestart Honda\surnameend},
  \bibinfo{author}{Vasco~Thudichum \surnamestart Vasconcelos\surnameend} \&
  \bibinfo{author}{Makoto \surnamestart Kubo\surnameend}
  (\bibinfo{year}{1998}): \emph{\bibinfo{title}{Language Primitives and Type
  Discipline for Structured Communication-Based Programming}}.
\newblock In: {\sl \bibinfo{booktitle}{Programming Languages and Systems, 7th
  European Symposium on Programming, ESOP 1998, Proceedings}}, {\sl
  \bibinfo{series}{LNCS}} \bibinfo{volume}{1381},
  \bibinfo{publisher}{Springer}, pp. \bibinfo{pages}{122--138},
  \doi{10.1007/BFb0053567}.

\bibitemdeclare{book}{DBLP:books/daglib/0004377}
\bibitem{DBLP:books/daglib/0004377}
\bibinfo{author}{Davide \surnamestart Sangiorgi\surnameend} \&
  \bibinfo{author}{David \surnamestart Walker\surnameend}
  (\bibinfo{year}{2001}): \emph{\bibinfo{title}{The Pi-Calculus - a theory of
  mobile processes}}.
\newblock \bibinfo{publisher}{Cambridge University Press}.

\bibitemdeclare{inproceedings}{DBLP:conf/esop/SwamyCC10}
\bibitem{DBLP:conf/esop/SwamyCC10}
\bibinfo{author}{Nikhil \surnamestart Swamy\surnameend}, \bibinfo{author}{Juan
  \surnamestart Chen\surnameend} \& \bibinfo{author}{Ravi \surnamestart
  Chugh\surnameend} (\bibinfo{year}{2010}): \emph{\bibinfo{title}{Enforcing
  Stateful Authorization and Information Flow Policies in Fine}}.
\newblock In: {\sl \bibinfo{booktitle}{Programming Languages and Systems, 19th
  European Symposium on Programming, {ESOP} 2010, Proceedings}}, {\sl
  \bibinfo{series}{LNCS}} \bibinfo{volume}{6012},
  \bibinfo{publisher}{Springer}, pp. \bibinfo{pages}{529--549},
  \doi{10.1007/978-3-642-11957-6\_28}.

\bibitemdeclare{article}{VivasYoshida}
\bibitem{VivasYoshida}
\bibinfo{author}{Jos{\'{e}}{-}Luis \surnamestart Vivas\surnameend} \&
  \bibinfo{author}{Nobuko \surnamestart Yoshida\surnameend}
  (\bibinfo{year}{2002}): \emph{\bibinfo{title}{Dynamic Channel Screening in
  the Higher Order pi-Calculus}}.
\newblock {\sl \bibinfo{journal}{Electr. Notes Theor. Comput. Sci.}}
  \bibinfo{volume}{66}(\bibinfo{number}{3}), pp. \bibinfo{pages}{170--184},
  \doi{10.1016/S1571-0661(04)80421-3}.

\end{thebibliography}

\end{document}